\documentclass[11pt,a4paper]{article}

\usepackage[T1]{fontenc}
\usepackage[utf8]{inputenc}
\usepackage[english]{babel}

\usepackage{geometry}
\usepackage{graphicx}
\usepackage{amssymb,amsmath,amsthm}
\usepackage{stmaryrd}
\usepackage{delarray}
\usepackage{thmtools}
\usepackage{thm-restate}
\usepackage{enumitem}
\setlist{nosep}
\usepackage{tikz}
\usepackage{hyperref}
\usepackage[capitalise]{cleveref}
\graphicspath{{figures/}}
\usepackage{authblk}
\usepackage{wrapfig}
\usepackage{multicol}
\usepackage{float}
\usepackage{appendix}
\usepackage{pdfpages}
\usepackage{todonotes}
\usepackage{qtree}

\newtheorem{theorem}{\textbf{Theorem}}
\newtheorem*{theorem*}{\textbf{Theorem}}
\newtheorem{lemma}[theorem]{\textbf{Lemma}}
\newtheorem{proposition}[theorem]{\textbf{Proposition}}
\newtheorem*{proposition*}{\textbf{Proposition}}

\newtheorem{corollary}[theorem]{\textbf{Corollary}}
\newtheorem{remark}{\textbf{Remark}}

\newtheorem{definition}{\textbf{Definition}}

\newtheorem{remark*}{Remark}

\renewcommand{\int}[1]{[#1]}
\newcommand{\intz}[1]{\llbracket#1\rrbracket}
\newcommand{\N}{\mathbb{N}}

\newcommand{\Cc}{\mathcal{C}}
\newcommand{\Dc}{\mathcal{D}}
\renewcommand{\O}{\mathcal{O}}
\newcommand{\Pc}{\mathcal{P}}

\newcommand{\Poly}{{\mathsf{P}}}
\newcommand{\NP}{\ensuremath{\mathsf{NP}}}
\newcommand{\coNP}{\ensuremath{\mathsf{coNP}}}

\newcommand{\spec}{\mathsf{Spec}}
\newcommand{\col}{\mathsf{Col}}

\newcommand{\dyna}[1]{\mathcal{G}_{#1}}
\newcommand{\C}{\mathcal{C}}
\newcommand{\glue}{\triangleright}

\newcommand{\const}{\mathsf{constant}}
\newcommand{\recol}{\mathsf{recolor}}
\newcommand{\join}{\mathsf{join}}
\newcommand{\leftdir}{\mathsf{left}}
\newcommand{\rightdir}{\mathsf{right}}

\newcommand{\decisionpbpromisew}[5]{
  \fbox{\parbox{{#5}\textwidth}{{\bf {#1}}\\{\it Input:} {#2}\\{\it Promise:} {#3}\\{\it Question:} {#4}}}
}

\title{Complexity lower bounds for succinct binary structures of bounded clique-width with restrictions}
\author[a]{Colin Geniet}
\author[b]{Ali\'enor Goubault-Larrecq}
\author[b,c]{K{\'e}vin Perrot}
\affil[a]{Discrete Mathematics Group, Institute for Basic Science, Daejeon, South Korea}
\affil[b]{Aix Marseille University, CNRS, LIS, Marseille, France}
\affil[c]{Université publique, France}
\date{}

\begin{document} 
	\maketitle
	
	\begin{abstract}
	  We present a Rice-like complexity lower bound for any MSO-definable problem
          on binary structures succinctly encoded by circuits.
          This work extends the framework recently developed as a counterpoint to Courcelle's theorem
          for graphs encoded by circuits, in two interplaying directions:
          (1) by allowing multiple binary relations, and (2) by restricting the interpretation of new symbols.
          Depending on the pair of an MSO problem~$\psi$ and an MSO restriction~$\chi$,
          the problem is proven to be NP-hard or coNP-hard or P-hard,
          as long as~$\psi$ is non-trivial on structures satisfying~$\chi$ with bounded \emph{clique-width}.
          Indeed, there are P-complete problems (for logspace reductions) in our extended context.
          Finally, we strengthen a previous result on the necessity to
          parameterize the notion of non-triviality, hence supporting the choice of clique-width.
	\end{abstract}


	\section{Introduction}
        \label{sintro}

        Courcelle's celebrated theorem states that any question on graphs expressible
        in monadic second order logic (MSO) on signature $\{=,\to\}$
        (quantifiers range over vertices and sets of vertices, and $\to$ stands for the graph's binary relation)
        is solvable in linear time on any class of graphs
        of bounded clique-width~\cite{CourcelleMakowkyRotics2000,OumSeymour2006}.
        On the other way around, a recent line of works~\cite{ggpt21,gglgopt25} has established hardness results
        when the graph is not given by an adjacency matrix,
        but is succinctly encoded in a Boolean circuit which,
        given two input vertices $x,y$, outputs one bit telling whether $x\to y$
        holds, a representation introduced in~\cite{Galperin1983}.
        In the spirit of Rice's theorem in computability theory,
        such results provide general algorithmic complexity lower bounds for any non-trivial MSO sentence.
	Precisely, one requires the sentence to have an infinity of models and of counter-models, both of bounded clique-width; this is called being \emph{cw-non-trivial}.

	\begin{theorem}[\cite{gglgopt25}]\label{thm:hardness-cw-graphs}
          If $\psi$ is a cw-non-trivial MSO sentence,
          then testing $\psi$ on graphs represented succinctly is either $\NP$-hard or $\coNP$-hard.
        \end{theorem}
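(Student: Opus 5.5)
The approach is a reduction from $\mathrm{SAT}$ (or its complement), in the spirit of the Rice-like reductions of~\cite{ggpt21}. Given a Boolean formula $\varphi$ on $n$ variables, we build in polynomial time a circuit $C_\varphi$ encoding a graph that is, up to isomorphism, one fixed graph $G_0$ when $\varphi$ is unsatisfiable. When $\varphi$ is satisfiable, the graph changes in a way that flips the truth value of $\psi$. Depending on whether the ``default'' graph is a model or a counter-model, this yields $\coNP$- or $\NP$-hardness.

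The key tool is the Feferman--Vaught type composition theorem for clique-width operations. Fix $k$ such that $\psi$ has infinitely many models and counter-models of clique-width at most $k$. Then the $\psi$-type (the MSO type of quantifier rank $\mathrm{qr}(\psi)$ over $k$ labels) of a $k$-expression's result is determined by the types of its subexpressions through finitely many operations. This gives a finite automaton on $k$-expression trees. In particular, disjoint unions of many copies of a labelled graph $H$ have types that are eventually periodic in the number of copies, by pigeonhole on the finite type monoid.

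The construction proceeds in three steps.
\begin{enumerate}
\item Using infinitely many models and counter-models, find a context $T$, that is, a $k$-expression with a hole, together with two $k$-labelled graphs $A$ and $B$ such that $T[A]\models\psi$ and $T[B]\not\models\psi$.
\item Aim for a \emph{uniform} gadget. We want a graph built from $2^n$ blocks, one per assignment $a$, where block $a$ equals $A$ if $\varphi(a)$ is false and $B'$ if $\varphi(a)$ is true. Here $B'$ is a variant chosen so that one differing block changes the type while the number of differing blocks is irrelevant.
\item Combine the blocks by disjoint union and apply $T$'s operations on top. The resulting circuit on vertex names $(a,v)$ evaluates $\varphi(a)$ and looks up constant-size tables for the edges inside blocks and between blocks according to the labels.
\end{enumerate}

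The main obstacle is the counting issue in step 2: the type of a union of $2^n$ blocks depends on \emph{how many} blocks are $B$, modulo some period and threshold, not merely on whether at least one is. To handle this, I would first replace each block by a union of $p$ copies, where $p$ is the period times a large factor, so that the block type becomes idempotent. The types $\tau_A$ and $\tau_B$ then generate a finite commutative monoid of idempotents, namely a semilattice, in which $\tau_A^{N}\tau_B^{m}$ is constant for all $m\ge1$ and $N$ large.

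If the resulting two values disagree under $\psi$, we are done. If they agree, I would walk along a chain of types connecting a model to a counter-model, which exists by non-triviality. Some consecutive pair along this chain must differ, which gives the $A$, $B$ to use. Edges between blocks are handled by the labels, so the circuit remains polynomial size. Padding by an arbitrary fixed number of assignments ensures that the default count is large enough.
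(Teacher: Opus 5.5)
There is a genuine gap in steps 2--3. You need a context $T$ and idempotent block types $\tau_A,\tau_B$ with $\psi(T[\tau_A])\neq\psi(T[\tau_A\tau_B])$, but such a pair need not exist. The ``chain of types'' argument does not produce one: a chain from a model to a counter-model only helps if every link has the form $T[e]$ with $e$ an idempotent block type, and models of $\psi$ need not have that form.

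Take $\psi$ = ``the graph is a path'', which is cw-non-trivial (paths and edgeless graphs have bounded clique-width). Every instance you output is $T[X_1\sqcup\dots\sqcup X_N]$ with $T$ fixed and $N\ge 2^n$ constant-size blocks. So at least $N/2$ of the blocks are identical coloured copies of one block. Every operation above the union acts uniformly on colour classes. Hence any edge created between one of these copies and a vertex outside it is created for all copies at once. Either some vertex then has degree at least $N/2-1$, or each copy is a union of connected components and the graph is disconnected. In both cases it is not a path, whatever $T,A,B,p$ you choose, so your reduction only ever outputs counter-models. Composing $2^n$ blocks in parallel (commutatively) under one fixed context is too weak.

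The missing idea is to compose the blocks \emph{sequentially}, as nested contexts; this is what the paper's proof (the case $\chi=\top$ of \cref{thm:main-NP-hard}) does. By \cref{lemma:pumpingModels}, pumping one large model along a branch of its clique-decomposition gives marked pieces $\C_s,\C_r$ and a piece $\C_e$ with $\Delta^\Gamma(s r^\ell e)\models\psi$ for every $\ell$. So the all-good instance is a model no matter how many blocks there are.

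The counting problem is then handled by the saturating graph $\Omega$ of \cref{prop:saturating}. $\Omega$ is exactly the absorbing element of your disjoint-union monoid of types of graphs of clique-width at most $k$, so that part of your intuition is sound. One pumps a model or a counter-model according to whether $\Omega\not\models\psi$ or $\Omega\models\psi$. A bad block is $\C''_r$ with a copy of $\Omega'$ placed beside it in a fresh colour that is never joined. The instance is therefore a pumped (counter-)model plus $j$ disjoint copies of $\Omega'$, and its value flips exactly when $j\ge1$.

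The price of nesting is paid on the circuit side, which your parallel version avoided. One needs $|\C_g|=|\C_b|$ and a common idempotent recolouring map (\cref{lem:idempotent-power}). Then adjacency between distant blocks is read from a constant-size table as in \cref{lemma:logspace}; see \cref{lemma:pumpingStable}.
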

	This succinct representation of graphs by circuits is motivated
        by natural models of computation called automata networks.
        An automata network (AN) of size $n$ is a
        finite discrete dynamical systems composed of interacting entities.
	Each component $i\in\{1,\dots,n\}$ has a finite set of states $A_i$,
        and the configuration space is  $X=\prod_{i=1}^n A_i$.
        The behavior of each component is provided as a local circuit describing its behavior,
        and the tuple of such circuits gives a succinct representation of the dynamics
        (a graph on vertex set $X$).
        When $A_i=\{0,\dots,q-1\}$ for all $i$ we call it a $q$-uniform AN
        (and a Boolean AN when $q=2$), which enforces the dynamics to have $q^n$ configurations.
	The computational complexity of many individual problems has been settled,
	from the existence of fixed points~\cite{Alon1985,Floreen1989} and limit cycles~\cite{Bridoux2021},
	to reachability~\cite{Chatain2020} and limit dynamics~\cite{Pauleve2020}.
	Arithmetical constraints are solved in~\cite{glp25v2}, hence the Theorem above
        holds in the context of ANs and provides a universal lower bounds:
	all cw-non-trivial problems are $\NP$-hard or $\coNP$-hard.
        For deterministic systems (graphs of out-degree one, which are almost trees), the dichotomy is sharp
        because trivial questions are answered in constant time.

        In the present work, we focus on embracing larger contexts.
	We consider multiple binary relations encoded in one structure,
	with additional restrictions in order to enforce the meaning of the additional symbols
	(for example a partial or total order) or structural properties of the instances
	(for example that it is injective, or deterministic, or strongly connected).
	The restriction itself is also expressed in terms of an MSO sentence denoted~$\chi$, treated as a promise.
	We call {\bf $\psi$-under-$\chi$-dynamics} this decision problem,
	i.e.\ testing the formula~$\psi$ under the promise~$\chi$, given a graph succinctly encoded by a circuit.
	The non-triviality assumption is adapted naturally when adding the constraint~$\chi$:
	we ask~$\psi$ to be \emph{cw-non-trivial under the restriction~$\chi$},
	meaning that there should be infinitely many models and counter-models of~$\psi$
	which furthermore satisfy~$\chi$.

	When the restriction~$\chi$ is union-stable
	(i.e.\ if~$G_1,G_2$ satisfy~$\chi$, then so does their disjoint union $G_1 \sqcup G_2$)
	we generalize the proof technique developed in~\cite{gglgopt25}
	and obtain hardness at the first level of the polynomial hierarchy for any non-trivial formula~$\psi$.
	\begin{restatable}{theorem}{mainNP}\label{thm:main-NP-hard}
	  Let~$\psi$ and~$\chi$ be two MSO formulae, such that
	  $\psi$ is cw-non-trivial under the restriction~$\chi$, and $\chi$ is union-stable.
	  Then {\bf $\psi$-under-$\chi$-dynamics} is either $\NP$-hard or $\coNP$-hard.
	\end{restatable}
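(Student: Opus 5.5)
The plan is to adapt the reduction behind \cref{thm:hardness-cw-graphs} from~\cite{gglgopt25}, which reduces SAT (or its complement) to testing $\psi$ on a succinct structure built by ``gluing'' an exponential family of small bounded clique-width pieces controlled by a formula. The new ingredient is that every instance produced must satisfy the promise $\chi$, and union-stability is exactly what guarantees this when the pieces are combined by disjoint union. The first step is to pass to a finite-state description. Fix a bound $k$ such that $\psi$ has infinitely many models and infinitely many counter-models of clique-width at most $k$ satisfying $\chi$. By the Feferman--Vaught/Courcelle compositionality for $k$-expressions, the MSO type of a structure, taken with respect to the quantifier rank of $\psi\wedge\chi$, is a homomorphism into a finite algebra. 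In particular, disjoint union acts on types through a finite commutative monoid~$M$. Restricted to disjoint unions, we only need the types of whole structures; relabelings and joins are needed only to build the fixed gadgets.

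Next we extract the combinatorial core. Let $T$ be the finite set of types realised by $\chi$-models of clique-width at most $k$. By union-stability, any disjoint union of such structures still satisfies $\chi$, and its type is the monoid product of the component types. The pigeonhole principle gives a type $\tau\in T$ realised by infinitely many structures. For every type $\sigma$, the sequence $\sigma,\sigma\tau,\sigma\tau^2,\dots$ in the finite monoid is eventually periodic. Because $\psi$ has both models and counter-models among $\chi$-structures, I expect to find types $\alpha,\beta$ and an idempotent power $e=\tau^p$ such that $\alpha e$ and $\alpha\beta e$ differ on $\psi$. The cleanest version takes a base structure $B$ of type $\alpha e$ and one extra component $D$ of type $\beta$ with the following property: $B\sqcup D^{\sqcup m}$ has the same $\psi$-value for all $m\ge 1$, and this value differs from the value of $B$ alone. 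The infinitude in cw-non-triviality is used precisely to make $e$ absorb the sizes, so that the construction works for arbitrarily large $n$ and padding is possible. This is the step I expect to be the main obstacle. The difficulty is to show that the non-triviality hypothesis really forces such a ``0 versus $\ge1$ copies'' distinction. If, for instance, $\psi$ only depends on counts modulo $q$, the distinction must still be extracted by choosing $D$ of idempotent type. I would first try to take $D$ itself of idempotent type $f$, and argue that if $\alpha f$ and $\alpha$ always agree on $\psi$, then all $\chi$-models of large size share one $\psi$-value, which contradicts cw-non-triviality.

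Finally, the reduction. Given a CNF formula $\varphi$ on $n$ variables, build in polynomial time a circuit on vertex set $V_B\sqcup(\{0,1\}^n\times V_D)$, padded to a uniform encoding. The circuit copies $B$ once. On block $x$ it writes a copy of $D$ if $x$ satisfies $\varphi$, and otherwise a copy of a neutral component $Z$ with $\alpha e\cdot\mathrm{type}(Z)=\alpha e$; one takes $Z=D'$ of type $e$, sized to match $|D|$ by further padding with $\tau$-copies. Adjacency is decided in time polynomial in $n$, since $B$, $D$ and $Z$ are fixed constants. Every produced instance is a disjoint union of $\chi$-structures, so by union-stability it satisfies the promise. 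Its $\psi$-value equals one value exactly when $\varphi$ is satisfiable. Depending on which value that is, this gives $\NP$-hardness or $\coNP$-hardness.
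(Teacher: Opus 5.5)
There is a genuine gap, and it lies less in the step you flagged than in the architecture: building every instance as $B\sqcup(\text{blocks})$ from fixed gadgets combined only by disjoint union cannot prove the theorem. Take $\chi=\top$ and $\psi$ = ``the graph is connected'', which is cw-non-trivial. Your padding requires $\mathrm{type}(B)\cdot e=\mathrm{type}(B)$ for an idempotent $e$ realised by a nonempty $Z$. Then $B\sqcup Z$ has the same rank-$m$ type as $B$ and is disconnected, so $B$ is disconnected, and likewise every structure of type $\alpha\beta e$. So the $\alpha,\beta,e$ of your key step do not exist. In fact every instance your circuit outputs is $B$ plus $2^n\ge 1$ nonempty blocks, hence disconnected whatever $\varphi$ is. Since $N$ cannot depend on satisfiability, the ``good'' case needs padding that preserves the type of an \emph{arbitrary} model, and disjoint union cannot provide it.

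The paper obtains this padding inside a clique-decomposition. By \cref{lemma:pumpingModels}, a large model of $\psi\wedge\chi$ has a branch with two $\join$ nodes of equal type. This gives $\C_s,\C_r,\C_e$ with $\C_s\glue\C_r^{\ell}\glue\C_e$ of constant type for all $\ell$, e.g.\ still connected. The good block glues two copies of a suitable power $\C''_r$ of $\C_r$. The bad block glues one copy of $\C''_r$ together with disjoint copies of the saturating graph $\Omega$ of \cref{prop:saturating} in a fresh colour. Sizes are equalised, and both blocks share one idempotent recolouring. That idempotence is what makes the succinct circuit feasible. Adjacency between vertices in far-apart blocks involves crossing arbitrarily many glued copies, which acts as a single application of $f_{\C_r}$ (\cref{lemma:logspace}). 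Your disjoint-union circuit never meets this difficulty precisely because it is too weak.

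The flagged step itself is also not established. Your contradiction argument (``if $\alpha f$ and $\alpha$ always agree on $\psi$, then $\psi$ is constant'') needs the MSO-specific fact that the rank-$m$ type of $\ell$ disjoint copies of a graph is constant for large $\ell$. That fact yields an absorbing type $\omega$ with $\omega t=\omega$ for all realised $t$, i.e.\ the saturating graph; taking $f=\omega$ then gives $\psi(\alpha)=\psi(\omega)$ for all $\alpha$. Without it, in the modular-counting scenario you explicitly allow, the claim is false. For ``even number of vertices'' in counting MSO, every idempotent type is realised only by even-sized structures, so $\psi(\alpha f)=\psi(\alpha)$ always, although $\psi$ is non-trivial. (The problem is then easy, since $N$ is part of the input.) With $\omega$ available, the ``$0$ versus $\ge 1$ copies'' distinction is immediate: $\psi(\omega)$ is fixed, and any $\chi$-model with the opposite value serves as base. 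But this works only once that base is padded by pumping rather than by disjoint union.
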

	The assumption that~$\chi$ is stable under disjoint unions is crucially used
	in the construction of \emph{MSO-saturating graphs}.

	Without this assumption however, the result fails:
	consider for instance as restriction~$\chi$ `the graph~$G$ is either a clique or edgeless',
	and as property~$\psi$ to test `$G$ is a clique'.
	Then testing~$\psi$ under the promise of~$\chi$ is equivalent to picking vertices~$x,y$ arbitrarily and testing if~$xy$ is an edge.
	With a succinct representation, this means evaluating the circuit representing~$G$ in~$(x,y)$,
	and is thus $\Poly$-complete (under logspace reductions).
	Using the logspace strengthening of the reductions developed in~\cite{glp25circuit},
	we show that the $\Poly$-hardness found in this example in fact always holds,
	except for the degenerate case where~$\psi$ under the promise of~$\chi$
	only depends on the size of the graph.

	Formally, we call $(\psi,\chi)$ \emph{cw-size-independent} if there are infinitely many sizes~$n$
	for which there is both a model and a counter-model of~$\psi$ of size~$n$, both satisfying~$\chi$.
	This in particular implies that~$\psi$ is cw-non-trivial under the restriction~$\chi$.
	\begin{restatable}{theorem}{mainP}\label{thm:main-P-hard}
	  Let~$\psi$ and~$\chi$ be two MSO formulae such that $(\psi,\chi)$ is cw-size-independent.
	  Then {\bf $\psi$-under-$\chi$-dynamics} is $\Poly$-hard under logspace reductions.
	\end{restatable}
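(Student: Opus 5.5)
We reduce from the Circuit Value Problem (CVP), which is $\Poly$-complete under logspace reductions. Given a Boolean circuit $D$ with no inputs, we want to build in logspace a circuit $C_D$ succinctly encoding a structure $G_D$ such that $G_D$ satisfies $\chi$ in all cases, and $G_D \models \psi$ if and only if $D$ evaluates to $1$. By cw-size-independence we fix a bound $k$ on clique-width and infinitely many sizes $n$ admitting two structures $M_n$ and $N_n$ of size $n$ and clique-width at most $k$, both satisfying $\chi$, with $M_n \models \psi$ and $N_n \not\models \psi$. The construction is then
\[
  C_D(x,y) \;=\; \bigl(D \wedge C_{M_n}(x,y)\bigr) \vee \bigl(\neg D \wedge C_{N_n}(x,y)\bigr),
\]
with one such gadget per binary relation symbol. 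Here $C_{M_n}$ and $C_{N_n}$ are circuits encoding $M_n$ and $N_n$ on the common vertex set $\{0,1\}^{\lceil\log n\rceil}$, restricted to the first $n$ elements, with the domain handled as in the paper's encoding conventions. The point of requiring both structures to have the same size is that the vertex set does not depend on the value of $D$. The resulting structure is then exactly $M_n$ or exactly $N_n$, so the promise $\chi$ always holds.

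The main obstacle is producing $C_{M_n}$ and $C_{N_n}$ within logspace (indeed in constant space, since they do not depend on $D$ except through its length). Choosing a single fixed $n$ would make the reduction trivially logspace, since the two circuits are constants hardwired into the reduction. So the simplest plan is to take the smallest size $n_0$ admitting such a pair and use it for every input $D$.

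This needs $M_{n_0}$ and $N_{n_0}$ to exist, which holds by assumption, and it needs the circuit encodings for that fixed size to be legitimate instances. Nothing more is required, so infinitude of sizes is not even needed. I expect the subtlety to lie instead in the paper's formal model. Perhaps instances must be succinct in a non-degenerate way, or the encoding is required to range over all of $\{0,1\}^m$ (for example, $q$-uniform automata networks having $q^m$ configurations). In that case sizes are constrained to a specific form, and we would need sizes $n$ of that form growing unboundedly. We would then invoke the logspace-constructible circuit representation of bounded clique-width structures from \cite{glp25circuit}: a $k$-expression of a size-$n$ structure can be turned into a polylogarithmic-size circuit computed in logspace, provided the family has a uniform description. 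Such a description can be obtained from MSO definability via the regularity of $k$-expression languages, in the spirit of the MSO-saturating graphs used for \cref{thm:main-NP-hard}.

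Finally, we verify correctness. The map $D \mapsto C_D$ is computed in logspace: it copies $D$ and appends fixed gadgets. If $D=1$ the encoded structure is $M_n \models \psi$, and otherwise it is $N_n$; both satisfy $\chi$. Hence CVP reduces to $\psi$-under-$\chi$-dynamics, which gives $\Poly$-hardness.
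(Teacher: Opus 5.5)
Your main plan, using one fixed size $n_0$ for every input, fails in the paper's model. In Section~2, a succinct representation $(N,D)$ must have a circuit whose size is polynomial in the size of the adjacency matrices, i.e.\ polynomial in $N$. The remark right after the paper's proof says this is exactly why arbitrarily large models and counter-models are needed.

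With $N=n_0$ fixed, your circuit $C_D$ contains a copy of the CVP instance $D$, which has unbounded size, while the encoded graph has constant size. So your outputs are not legitimate instances: the hardness of CVP would be hidden inside a constant-size graph. Your remark that ``infinitude of sizes is not even needed'' should have been a warning. It would make both the infinitely-many-sizes clause of cw-size-independence and the clique-width bound superfluous.

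So the real obstacle is not sizes of the form $q^m$ (the $q$-uniform setting is a separate side remark in the paper). On input $D$, the reduction must produce a model and a counter-model of $\psi$ that both satisfy $\chi$ and have the \emph{same} size $N \ge |D|^{\Omega(1)}$. Their circuits must also be computable in space $O(\log |D|)$.

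Your fallback only gestures at this (``provided the family has a uniform description\ldots via regularity''). It never explains how to get equal-size model/counter-model pairs at unboundedly many sizes, uniformly and in logspace. The paper's missing ingredients are as follows.
\begin{enumerate}
\item Fix one size $x > 2^{|T^{[k]}_m|}$ at which both $\psi\wedge\chi$ and $\neg\psi\wedge\chi$ have width-$k$ models. This is where cw-size-independence is actually used.
\item Apply the pumping \cref{lemma:pumpingModels} to each of these two models.
\item Replace $\tilde\C_{r_+}$ by $|\tilde\C_{r_-}|$ glued copies of itself, and vice versa. Then $|\Delta^\Gamma(s_+r_+^\ell e_+)| = |\Delta^\Gamma(s_-r_-^\ell e_-)|$ for every $\ell$, which is \cref{lemma:pumpingUnstable}.
\item Make both repeat blocks idempotent. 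By \cref{lem:idempotent-power}, glue each block $t_+t_-$ times, where $f_{\C_{r_\pm}}^{t_\pm}$ is idempotent; this preserves equal sizes, since $f^t$ idempotent implies $f^{tj}=f^t$. \cref{lemma:pumpingUnstable} does not state idempotence, so this step is needed before \cref{lemma:logspace} applies.
\item \cref{lemma:logspace} then gives circuits for both pumped graphs in space logarithmic in $\ell$.
\item Take $\ell = |D|$, so the graph size is proportional to the circuit size. Your multiplexer then completes the reduction.
\end{enumerate}
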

	Note that without the cw-size-independence assumption,
	one can find~$\psi$ cw-non-trivial under~$\chi$ such that {\bf $\psi$-under-$\chi$-dynamics}
	can be solved in logspace by simply looking at the number of vertices of the input graph
	(which is given in the succinct model).
	For instance, choose~$\chi$ to require the graph~$G$ to be a cycle, and~$\psi$ to test whether~$G$ is bipartite,
	hence whether the number of vertices is odd or even.

	The previous hardness results all assume that the property~$\psi$ is non-trivial on bounded clique-width structures.
	It is natural to ask whether this can be relaxed.
	When one only simply asks~$\psi$ to have infinitely many models and counter-models,
	a negative answer is given by~\cite{gglgopt25}:
	when considering a non-trivial FO formula~$\psi$ whose spectrum (i.e.\ set of possible sizes of models) is extremely sparse,
	they show that a $\NP$- or $\coNP$-hardness result for~$\psi$ in the style of \cref{thm:hardness-cw-graphs} is very unlikely,
	as it would imply that there is a `large' set~$S$ of integers such that
	\textbf{SAT} can be solved in polynomial time for any instance with size in~$S$.
	The `large' set~$S$ here is formalized by the notion of \emph{robustness} of Fortnow and Santhanam~\cite{FS2017robust}.

	We prove that this result still holds with a much stronger non-triviality assumption on~$\psi$,
	namely that it is non-trivial for planar graphs with bounded degree.
	\begin{restatable}{theorem}{planar}\label{thm:intro-planar}
	  There is a first-order formula~$\psi$ which has infinitely many models and counter models among planar graphs with maximum degree~4, and such that
	  \textbf{$\psi$-dynamics} cannot be $\NP$- or $\coNP$-hard,
	  unless there is a polynomial time algorithm solving $\mathbf{SAT}$ on a robust set of sizes of instances.
	\end{restatable}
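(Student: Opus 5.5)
We will adapt the argument of \cite{gglgopt25} for sparse spectra, replacing their FO formula by one whose models are forced to be planar with maximum degree~4. Fix a very fast-growing function, say $f(1)=2$ and $f(k+1)=2^{2^{f(k)}}$, and let $S=\{f(k) : k\geq 1\}$. We will design a first-order formula~$\psi$ such that every model of~$\psi$ has its number of vertices in a set $S'$ that is polynomially related to~$S$, and such that for every $n\in S'$ there is a planar degree-$\le 4$ model. Counter-models are easy to obtain (for instance, any sufficiently long path), so non-triviality among planar bounded-degree graphs will follow.

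The difficulty is that FO with this spectrum usually relies on encoding arithmetic in dense structures. Instead, we will encode a computation in a square grid. The intended models of~$\psi$ are $m\times m$ grids with one extra, locally checkable decoration. The decoration consists of vertex labels, encoded by small pendant gadgets attached to grid vertices so that the degree stays at most~$4$ and planarity is preserved. These labels describe a space-time diagram of a fixed Turing machine (or cellular automaton) $M$ that, started on a blank tape, accepts exactly when its tape length $m$ is of the form $f(k)$ and fills the grid exactly.

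First-order logic can verify all of this locally. It checks that the degrees and the gadgets are correct, that the structure looks locally like a grid, and that the boundary and corners are correct; it then checks the transition rule of $M$ on every $2\times 3$ window, and that the last row is accepting. The main obstacle is that FO cannot verify global grid-ness; local checks only rule out defects. Following the standard tiling argument, we will make the local constraints rigid enough (orientation markers with consistent horizontal and vertical colours, exactly four corners, and boundary rows labelled as such) that any finite structure satisfying them is a genuine grid. A finite torus-like or twisted structure must then be excluded, which is done by requiring the boundary labels to be present. This step requires care, but it is classical.

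Given $\psi$, suppose \textbf{$\psi$-dynamics} were $\NP$-hard via a polynomial reduction $R$; the $\coNP$ case is symmetric. Take a SAT instance $\phi$ of size $s$. The reduction produces a circuit on $N$ vertices with $N\le 2^{p(s)}$. If $N\notin S'$, the answer is ``no'' immediately. Because $S'$ is so sparse, for most input sizes $s$ the only elements of $S'$ below $2^{p(s)}$ are of size at most $\log$-small, that is, polynomial in $s$. In that case the circuit's graph can be expanded explicitly in polynomial time and $\psi$ can be evaluated by brute force, which decides $\phi$. Making ``most sizes'' precise, the set of $s$ for which this works contains, for each $k$, the whole interval between roughly $\log f(k)$ and $f(k+1)^{\epsilon}$. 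By the definition of Fortnow and Santhanam, this set is robust. This yields a polynomial-time algorithm for \textbf{SAT} on a robust set of sizes, as claimed.
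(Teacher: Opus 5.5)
\textbf{Gap: the construction of $\psi$.} You claim that local first-order constraints force every finite model to be a genuine $m\times m$ grid. These constraints are degrees, grid-like neighbourhoods, orientation markers, $2\times 3$ transition windows, ``exactly four corners'' and boundary labels being present. They cannot do this, because first-order logic does not express connectivity, so a model may be the intended grid together with extra components.

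For instance, add a disjoint $a\times b$ torus, with $a,b$ larger than the radius of your local checks, on which every cell is blank and no head occurs. Every vertex then has degree $4$ and a grid-interior neighbourhood, and every window is a legal static transition. All global requirements (four corners, boundary rows, accepting last row) are already witnessed by the grid component. More generally, for any total deterministic local rule, every orbit on a cyclic tape is eventually periodic, so toroidal space-time diagrams exist for every width. Ruling them out needs aperiodicity-type machinery, and cylinders and twisted gluings must be handled as well.

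Once disjoint $r\times r$ and $(r+1)\times(r+1)$ tori can be added, the spectrum of your $\psi$ is co-finite. Your step ``if $N\notin S'$, answer no'' then collapses. Getting an FO sentence whose models are planar of maximum degree~$4$ and whose spectrum is a prescribed sparse set is exactly the content of Dawar and Kopczyński's theorem (\cref{thm:planar-spectrum}). The paper uses it as a black box for $S=\exp(A(\N))$, whose membership test runs in time $O(\log N)$. Calling this step ``classical'' leaves the core of the proof unproved.

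Separately, your decoration scheme is impossible as stated. Interior vertices of an $m\times m$ grid already have degree~$4$, so pendant gadgets break the degree bound. The labels must be unary predicates (colored graphs, as in the paper), or the base lattice must change.

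\textbf{The complexity half.} This is a legitimate and more direct alternative to the paper's route through meagerness sets and \cref{lem:robust-prim-rec}, which only needs the spectrum to avoid increasing primitive recursive sequences. However, your interval is wrong. With $N\le 2^{p(s)}$ and brute force affordable up to size $q(s)$, a size $s$ is good when every spectrum element below $2^{p(s)}$ is at most $q(s)$. That is roughly $s\in[f(k)^{1/c},(\log f(k+1))^{1/d})$, for constants $c,d$ depending on $q,p$.

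Your intervals $[\log f(k), f(k+1)^{\epsilon}]$ overlap and cover all large $s$, which would give a polynomial algorithm for SAT on all large sizes. Indeed, at $s\approx\log f(k)$, an instance mapped to a graph of size $f(k)=2^{s}$ can neither be rejected nor expanded in polynomial time.

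With $\log f(k+1)=2^{f(k)}$, the corrected intervals still contain $\{\ell,\dots,\ell^j\}$ for each $j$ once $k$ is large, so robustness survives. What your argument really uses is that $\log f(k+1)$ is superpolynomial in $f(k)$. The paper's spectrum $\exp(A(\N))$ also has this property.
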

	Planar graphs with bounded degree are arguably the simplest graph classes beyond the realm of bounded clique-width.
	Indeed, they correspond to the simplest obstructions to clique-width --- namely planar grids.
	Furthermore, specifically from the point of view of FO logic,
	bounded degree graphs on the one hand and planar graphs on the other
	are two rather well-behaved graph classes,
	as illustrated by early results on FO model-checking for these graphs~\cite{seese1996linear,frick2001FOplanar}.
	Thus we believe \cref{thm:intro-planar} to be a very strong indication that
	\cref{thm:hardness-cw-graphs,thm:main-NP-hard} cannot be meaningfully generalized beyond the bounded clique-width setting.

        Given that the succinct version of FO problems can be complete for all levels of
        the polynomial hierarchy~\cite{ggpt21} and that MSO sentences go higher~\cite{Balcazar1992},
        it remains to understand more finely the level of complexity of some questions
        (the $\NP$ or $\coNP$ bound, although one can identify which of the two holds, is sometimes low).
        Transposing the results to restricted families of local functions or relations in ANs,
        and to other finite models of computations, would be meaningful.
        The aim is to prove that any question on the dynamics of finite discrete systems
        which is not trivial is complex.

	\paragraph*{Structure of the paper}
        Section~\ref{s:def} introduces the necessary definitions:
        succinct graph representations, clique-width, MSO types and compositionality.
	Then, we prove two different pumping lemmas in Section~\ref{s:pumpings},
	depending on the properties of the pair $(\psi,\chi)$.
	In Section~\ref{s:reductions}, we construct circuits to succinctly encode
	the graphs constructed by these pumping lemmas, giving complexity reductions which prove \cref{thm:main-NP-hard,thm:main-P-hard}.
        Section~\ref{s:applications} showcases the applicability of this general complexity lower bound,
        and provides examples to sustain the definitions at stake.
	Finally, in Section~\ref{s:bounded}, we prove \cref{thm:intro-planar}, strengthening the result of~\cite{gglgopt25}.

	\section{Definitions}
        \label{s:def}

	We denote by $\intz{n}=\{0,\dots,n-1\}$ the interval of integers.
        
        \subsection{Structures}

	A directed graph $G=(V,E)$ is a binary relation $E \subseteq V \times V$.
        In this work we consider more generally a set $R_2$ of binary relation symbols,
	and see them as	\emph{labels} for the arrows:
	an arc is a triple~$(a,u,v)$ where~$a \in R_2$ is the symbol of a relation.
	We thus consider \emph{$R_2$}-graphs $G=(V,E)$ with $E \subseteq R_2 \times V \times V$,
	which we sometimes simply call graphs.
	We denote by $|G|=|V|$ the number of vertices.

	A \emph{Boolean circuit} is a directed acyclic graph of in-degree and out-degree at most two.
	Vertices of in-degree zero are its \emph{inputs}, of in-degree one are $\neg$-gates,
	of in-degree two are either $\wedge$-gates or $\vee$-gates,
	and of out-degree zero are its \emph{outputs} (vertices are ordered and some label gives their kind).
	From an input binary string providing a value for each input,
	a circuit computes an output binary string following the usual semantics at each gate.

	A \emph{succinct representation} of an $R_2$-graph is a couple $(N,D)$
	when $N$ is the number of vertices and $D$ is a circuit with $2\cdot\lceil\log_2(N)\rceil$ inputs
	and $|R_2|$ outputs, computing the relations of $R_2$ which hold for a given pair of vertices $v,v'\in V$.
	We assume that the vertices are numbered from $0$ to $N-1$, and denote $\dyna{N,D}$ the encoded $R_2$-graph.
        The size of $D$ is considered to be polynomial in
        the size of adjacency matrices for each relation.
        
        \subsection{Clique-width}

        Let $C$ be a finite set of colors.
        First, we define operations meant to decompose $C$-colored $R_2$-graphs,
	which are $R_2$-graphs $G=(V,E,\col_G)$ with a coloring function $\col_G:V\to C$
	(each vertex has a unique color).
        
        \begin{itemize}
		\item $\const^\mathcal{R}_c(v)$: parameterized by a subset $\mathcal{R} \subseteq R_2$ and a color $c \in \col$,
		it creates from the vertex~$v$ the graph
        	$G=(\{v\}, \{(r, v, v) \mid r \in \mathcal{R}\}, \col_G)$ with
        	$\col_G(v) = c$.
		\item $\recol_f(H)$: for every \emph{recoloring function} $f : C \to C$
        	and $C$-colored graph $H$ it constructs the graph
        	$G = (V(H), E(H), \col_G)$ where $\col_G = f \circ \col_H$.
		\item $\join_M(H,H')$: parameterized by a subset $M \subseteq R_2 \times \{\leftdir,\rightdir\} \times C^2$,
		it constructs the graph~$G$ as follows.
		First take the disjoint union of~$H$ and~$H'$.
		Next, consider any $u \in V(H)$ and $v \in (H')$ with colors $c_u = \col_H(u)$ and $c_v = \col_{H'}(v)$.
		Then for each symbol $r \in R_2$, add the edge~$(r,u,v)$ if and only $(r,\rightdir,c_u,c_v) \in M$,
		and symmetrically add the edge $(r,v,u)$ if and only if $(r,\leftdir,c_v,c_u) \in M$.
        \end{itemize}
        
        A \emph{clique-decomposition} $\mathcal{C}_G$ of a $C$-colored graph $G$ is a binary 
        tree such that every node is labeled by one of these operations, which describes how to construct 
        the graph~$G$ using only colors in $C$. 
        There are $|G|$ nodes labeled as constants, which correspond to the leaves of 
	$\mathcal{C}_G$, also denoted $|\mathcal{C}_G|$;
	all the nodes labeled by $\recol$ are of arity $1$
	and all the nodes labeled by $\join$ are of arity $2$.
	We extend the notion of clique-decomposition to non-colored graphs $G$,
        saying that $\mathcal{C}_G$ is a clique-decomposition of $G$
        if there exists a coloration $\col_G$ of $G$ such that 
        $\mathcal{C}_G$ is a clique-decomposition of $(V(G), E(G), \col_G)$.
        
        \begin{remark}
        	\label{r:sameGtoCD}
		One clique-decomposition constructs exactly one $C$-colored graph which correspond to one non-colored
        	graph, so we will often identify a clique-decomposition with the unique graph it constructs. 
        	On the other hand, one graph (colored or not) may have multiple clique-decompositions.
        \end{remark}
        
        The \emph{width} of a clique-decomposition $\mathcal{C}$ is the number of colors it effectively uses,
	i.e.\ the number of $c \in C$ such that at some node the constructed graph has at least one vertex of color $c$.
        A graph $G$ has \emph{clique-width} $k$ if there exists a clique-decomposition
        of~$G$ of width $k$, and no clique-decomposition of $G$ with fewer colors.
	In other words, it is the minimum number of colors needed to construct a colored graph
        isomorphic to $G$ with the operations.
	This variant of clique-width is called NLC-width, it is smaller and at least half of the actual clique-width
	for any graph~\cite{johanson98nlcwidth,johansson2001graph}, hence for the purpose of the present article
	(considering families of bounded clique-width) it is equivalent.
        Clique-width is a graph-parameter more general than
        tree-width, since every family of graphs of bounded tree-width is also of
	bounded clique-width (the converse is in general false)~\cite{corneil2001relationship}.
	An example is given on Figure~\ref{fig:cliquedecomposition}.
        
        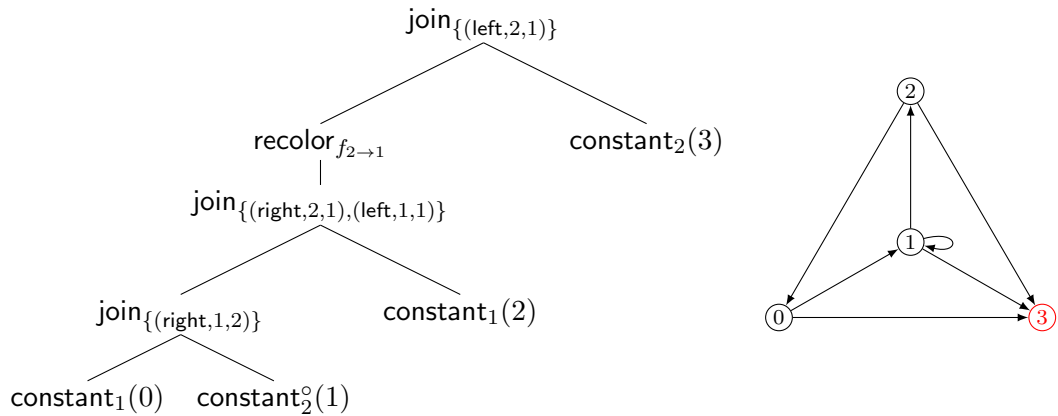
\begin{figure}
        	\begin{minipage}[c]{0.6\linewidth}
        		\centering
        		\Tree [.$\join_{\{(\leftdir, 2, 1)\}}$
        		[.$\recol_{f_{2 \to 1}}$
        		[.$\join_{\{(\rightdir, 2, 1), (\leftdir, 1, 1)\}}$
        		[.$\join_{\{(\rightdir, 1, 2)\}}$
        		$\const_1(0)$
        		$\const_2^{\circ}(1)$
        		]
        		$\const_1(2)$
        		]
        		] 
        		$\const_2(3)$
        		]
        	\end{minipage}
        	\begin{minipage}[c]{0.4\linewidth}
        		\centering
        		\begin{tikzpicture}
        			\tikzstyle{graphe} = [draw,circle,inner sep=0pt,minimum size=10pt]
        			\tikzstyle{arc} = [-latex]
        			
        			\node[graphe] (v0) at (210:2cm) {\scriptsize $0$};
        			\node[graphe] (v1) at (0:0cm) {\scriptsize $1$};
        			\node[graphe] (v2) at (90:2cm) {\scriptsize $2$};
        			\node[graphe, red] (v3) at (330:2cm) {\scriptsize $3$};

        			\draw[arc] (v0) to (v1);
        			\draw[arc] (v2) to (v0);
        			\draw[arc] (v1) to (v2);
        			\draw[arc] (v1) to[loop right] (v1);
        			\draw[arc] (v0) to (v3);
        			\draw[arc] (v1) to (v3);
        			\draw[arc] (v2) to (v3);
        		\end{tikzpicture}
        	\end{minipage}
		\caption{
			Clique-decomposition (left) of a graph (right).
			Color $1$ is in black and color $2$ is in red.
        		We denote $f_{2\to 1}$ the function such that
        		$f_{2\to 1}(2) = 1$ and $f_{2\to 1}(i) = i$ for any $i \ne 1$.
        		The width of this decomposition is $2$.
        		There is only one arrow label, so we do not mention it in the $\join_M$ 
        		operations, such that we 
        		can consider $M \subseteq \{\leftdir,\rightdir\} \times C^2$, and neither 
        		in the $\const^{\circ}$ operation where we use $\circ$ to mention the
        		loop with the only considered arrow label.
		}
        	\label{fig:cliquedecomposition}
        \end{figure}

	When considering the tree of a clique-decomposition of~$G$,
	the $\join$ nodes are the only ones with degree~2, and the leaves are in bijection with the vertices of~$G$.
	This immediately implies the following.
	\begin{remark}\label{rmk:decomposition-depth}
		If~$\Cc$ is a clique decomposition of a graph~$G$ with more than $2^{h-1}$ vertices,
		then there is a branch of~$\Cc$ containing at least~$h$ $\join$ nodes.
	\end{remark}

        \subsection{MSO and graph logics}\label{ss:mso}
        
        We consider signatures of the form $\mathcal{S} = \{=\} \cup R_2$ where 
        $=$ is the equality of vertices, and~$R_2$ is a set of binary relations over the vertices (for example $\to$, $\le$, 
        \emph{etc.}) that we also see as labels on arrows.
	\emph{Monadic Second Order} (MSO) logics allows to quantify over vertices and sets of vertices,
	see for example~\cite{libkin2004elements} (also for more on types).
        The quantifier rank of a formula $\psi$ is its depth of quantifier nesting.
        Given an MSO formula $\psi$, we say that a graph $G$ is a model of $\psi$ and write 
        $G \vDash \psi$ when the graphical property described by $\psi$ is true in $G$. If it is not, 
        we say that $G$ is a counter-model and write $G \nvDash \psi$.

        For a graph family~$F$, the formula~$\psi$ is \emph{$F$-non-trivial} if it has 
        infinitely many models and infinitely many counter-models in~$F$. Otherwise, 
        $\psi$ is \emph{$F$-trivial}.
	We say that $\psi$ is \emph{cw-non-trivial} 
        if there exists $k$ such that $\psi$ has infinitely many models of clique-width at most $k$ and infinitely many counter-models 
        of clique-width at most $k$. Otherwise, $\psi$ is \emph{cw-trivial}.
        Given MSO formulae $\psi$ and $\chi$, we say that $\psi$ is \emph{cw-non-trivial under the restriction $\chi$}, 
        if $\psi$ has infinitely many models of clique-width at most $k$ that are also models of $\chi$ and 
        infinitely many counter-models of clique-width at most $k$ that are also models of $\chi$. 
        Otherwise, $\psi$ is said to be \emph{cw-trivial under the restriction $\chi$}.
        Formula $\chi$ is seen as a \emph{restriction} over admissible structures, and
	for short we may simply write that~$\psi$ is trivial or non-trivial under $\chi$.
	
	We say that a formula $\chi$ is \emph{union-stable} if its set of models is closed under disjoint 
	union (denoted $\sqcup$), and \emph{union-unstable} otherwise.
	For example,
	the formula $\chi = \forall x, \exists y,(x \to y) \wedge (\forall z, \neg(z = y) \implies \neg(x \to z))$,
	meaning that $\to$ is a deterministic relation,
	is union-stable because the union of two graphs of out-degree exactly $1$
	is still a graph of out-degree exactly $1$.
	However,
	the formula $\chi = \forall x, y, z,( x \le x) \wedge (x \le y \wedge y \le x \implies x = y)
	\wedge (x \le y \wedge y \le z \implies x \le z) \wedge ( x \le y \vee y \le x)$,
	meaning that $\le$ is a total order, is union-unstable because models are necessarily connected for this 
	relation, hence for non-empty $G$ and $H$ we always have $G \sqcup H \nvDash \chi$.
	Let us insist on the fact that
	the notion of union-stability is relative to $\chi$ only,
	and not to the fact that formula $\psi$ is trivial or not.
	In fact, union-stability is useful when we want to construct models or counter-models for $\psi$
	still under the restriction $\chi$.
		
		
        \begin{definition}\label{def:spectrum}
        	The \emph{spectrum} of a formula $\psi$ is the set of sizes of 
		finite models of $\psi$, \emph{i.e.}~%
        	$\spec(\psi) = \{|G| : G \vDash \psi, |G| < \infty \}$.
        \end{definition}

        We say that a couple $(\psi,\chi)$ is \emph{$F$-size-independent} when
        the spectrum of $\psi\wedge\chi$ and of $\neg\psi\wedge\chi$
        have an infinite intersection among graphs in the family $F$.
        Otherwise it is called $F$-size-dependent, and consequently under the restriction $\chi$
        the model-checking of~$\psi$ reduces to a consideration on the size of structures.
        Finally, $(\psi,\chi)$ is \emph{cw-size-independent} if there is a~$k$ such that
	for~$F$ the family of graphs of clique-width at most~$k$, $(\psi,\chi)$ is $F$-size-independent.
        That is, there are arbitrarily large graph sizes $n$ such that:
        \begin{itemize}
          \item there exists $G$ of size $n$ and clique-width at most $k$ such that
            $G\models\psi$ and $G\models\chi$;
          \item there exists $G'$ of size $n$ and clique-width at most $k$ such that
            $G'\not\models\psi$ and $G\models\chi$.
        \end{itemize}

	Let us point out that \cref{lemma:pumpingUnstable} implies that
	if~$\psi$ is cw-non-trivial under~$\chi$, and~$\chi$ is union-stable,
	then $(\psi,\chi)$ is cw-size-independent.
	Hence, cw-size-independence is only used in cases where~$\chi$ is not assumed to be union-stable.

        \subsection{MSO types for clique-decompositions}
	Let $\mathcal S=\{=\}\cup R_2$ be a signature, $C$ a set of colors, and $G$ a $C$-colored graph.
	When considering MSO logic for~$G$, we also allow formulae to refer to the color of vertices in~$C$,
	i.e.\ formally we consider formulae with signature $\mathcal{S} \cup C$ with colors in~$C$ being unary predicates.
        \begin{remark}
		As in Remark~\ref{r:sameGtoCD}, if~$\Cc$ is a clique-decomposition corresponding to the $C$-colored graph~$G$,
		then we use $\Cc \vDash \psi$ to denote that $G \vDash \psi$.
        \end{remark}

        The \emph{quantifier-rank-$m$ MSO type of $G$}
        is the set of MSO formulae of quantifier rank at most $m$ that are satisfied by $G$,
	that is $t_m(G) = \{ \psi \mid G \vDash \psi, r(\psi) \le m\}$ where $r(\psi)$ is the quantifier rank of $\psi$.
        We say that a set of formulae of quantifier rank at most~$m$ is a 
        \emph{realized type of quantifier rank $m$} if it is the type of some graph~$G$.
	When~$G$ is a $C$-colored graph, its type~$t_m(G)$ is defined using all MSO over the signature~$\mathcal{S} \cup C$.

	The following standard result is essential for algorithmic aspects of model-checking,
	and can be proved by reducing any formula of rank~$m$ to a normal form with size bounded by a function of~$m$.
        \begin{lemma}[{\cite[Proposition~7.5]{libkin2004elements}}]\label{lemma:finitely-many-types}
		For any finite signature~$\mathcal S$ and rank $m \in \N$,
        	there are only finitely many types of quantifier rank $m$.
        \end{lemma}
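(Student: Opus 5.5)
The proof goes by induction on the rank~$m$, building a finite normal form for formulae. A type is a set of formulae, and there are infinitely many syntactically distinct formulae of each rank. So the real content is that, up to logical equivalence, there are only finitely many formulae of rank at most~$m$ with free variables among a fixed finite set. A type is then determined by which of these finitely many equivalence classes it contains, and this gives at most $2^{N_m}$ types, where $N_m$ is the number of classes.

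I would prove the following stronger statement by induction on~$m$. For every finite set of free first-order variables $x_1,\dots,x_p$ and free set variables $X_1,\dots,X_q$, there are only finitely many formulae of quantifier rank at most~$m$ over $\mathcal S$ with these free variables, up to logical equivalence.

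\emph{Base case $m=0$.} A quantifier-free formula is a Boolean combination of atomic formulae. Since $\mathcal S$ is finite and the free variables are fixed, there are finitely many atoms: $x_i=x_j$, $r(x_i,x_j)$ for $r\in R_2$, $x_i\in X_j$, and unary color predicates applied to $x_i$ when colored graphs are considered. Boolean combinations of $k$ atoms have at most $2^{2^k}$ classes up to equivalence, taking disjunctive normal forms over the atoms.

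\emph{Inductive step.} A formula of rank $m+1$ is a Boolean combination of formulae of rank at most~$m$ and of formulae $\exists y\,\varphi$ or $\exists Y\,\varphi$, where $\varphi$ has rank at most~$m$ and one extra free variable. Universal quantifiers are rewritten as $\neg\exists\neg$ without changing the rank. By induction there is a finite set of representatives for such~$\varphi$ with the enlarged variable set. Replacing $\varphi$ by an equivalent representative preserves equivalence of $\exists y\,\varphi$. Hence finitely many existential formulae arise up to equivalence, and finitely many Boolean combinations of a finite set of formulae arise up to equivalence, again via disjunctive normal form. Taking $p=q=0$ gives the lemma for sentences.

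The only delicate point is keeping the induction hypothesis general enough to handle the extra free variable introduced at each quantifier. This is why the statement is quantified over all finite sets of free variables rather than only sentences.
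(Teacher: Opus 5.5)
Your proposal is correct and is essentially the argument the paper has in mind. The paper gives no proof of its own: it cites Libkin's Proposition~7.5 and remarks that every rank-$m$ formula reduces to a normal form of size bounded by a function of~$m$, which is exactly what your induction on rank over an arbitrary finite set of free variables establishes. The one detail left implicit is renaming each newly quantified variable to a canonical fresh one (say $x_{p+1}$ or $X_{q+1}$), so that the induction hypothesis is applied to a single enlarged variable set rather than to infinitely many possible bound-variable names.
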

	The signature~$R_2$ being implicitly fixed and finite, for a set~$C$ of colors,
	we denote by~$T^C_m$ the (finite) set of all types of $C$-colored graphs of quantifier rank~$m$.
        
         \begin{remark}
		Again, we extend the notion of type to clique-decompositions as explained in Remark~\ref{r:sameGtoCD},
		that is if $G$ has a clique-decomposition $\mathcal{C}$ then we denote
		$t_m(\mathcal{C})=t_m(G)$.
        \end{remark}
        
	Now we introduce a gluing operation on clique-decompositions, analogous to that on tree-decompositions
	used in~\cite{ggpt21}, and for this purpose we introduce a placeholder in a clique-decomposition
        as in~\cite{gglgopt25}.
	A \emph{marked} clique-decomposition is a clique-decomposition with a unique special leaf node labeled by~$\square$.
	A marked clique-decomposition does not directly give a graph;
	rather it is used as a building piece for larger clique-decompositions.
        
	The operation of \emph{gluing} two clique-decompositions $\mathcal{C} \glue \mathcal{C'}$, with 
        $\mathcal{C}$ being a marked clique-decomposition, consists in replacing
        the $\square$ node of $\mathcal{C}$ by $\mathcal{C'}$.
	Thus $\Cc \glue \Cc'$ is marked if and only if~$\Cc'$ is marked.
	The gluing $\glue$ is not commutative but it is associative,
	and we introduce a notation for sequences of gluings.
	Given a set of clique-decompositions $\Gamma = \{\mathcal{C}_i\}_{i \in I}$ for a finite set~$I$,
        and a non-empty finite word $w= w_1 \cdots w_{m} \in I^{m}$,
        we define~$\Delta^{\Gamma}(w)$ inductively:
        \[
        \Delta^{\Gamma}(w_1) = \mathcal{C}_{w_1}
        \quad\text{and}\quad 
        \Delta^{\Gamma}(w_1 \cdots w_{m}) = 
        \Delta^{\Gamma}(w_1 \cdots w_{m-1}) \glue \mathcal{C}_{w_{m}}.
        \]
	Implicitly, this notation requires all~$\Cc_i$ except possibly~$\Cc_m$ to be marked.
        
        %

	Gluing clique-decompositions is compositional relative to MSO types, as expressed in the following lemma.
        \begin{lemma}[{\cite[Corollary~5.60]{CourcelleE2012}}]
        	\label{lem:compositionality-oplus}
		For a set~$C$ of colors and $m \in \N$, consider a marked clique-decomposition~$\C_1$,
		and two $\C_2,\C'_2$ two clique decompositions with the same quantifier-rank-$m$ MSO type (as $C$-colored graphs).
		Then $\C_1 \glue \C_2$ and $\C_1 \glue \C'_2$ also have the same quantifier-rank-$m$ MSO type.
        \end{lemma}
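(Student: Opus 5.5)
The plan is to prove the lemma as an instance of the Feferman--Vaught--Shelah composition method, by induction along the path from the root of $\C_1$ to its marked leaf $\square$. Write this path as $\square=\nu_0,\nu_1,\dots,\nu_\ell$, with $\nu_\ell$ the root of $\C_1$. For each $i$, let $\C_1^{(i)}$ be the marked sub-decomposition rooted at $\nu_i$. The graph built by $\C_1^{(i)}\glue\C_2$ is then obtained from the graph built by $\C_1^{(i-1)}\glue\C_2$ by one of two steps: a recoloring $\recol_f$, or a join $\join_M$ with the fixed, unmarked sibling subtree $\Dc_i$ hanging off the path at $\nu_i$ (on the left or on the right). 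It therefore suffices to prove two closure facts: $t_m(\recol_f(H))$ depends only on $t_m(H)$, and $t_m(\join_M(H,H'))$ depends only on $t_m(H)$ and $t_m(H')$. Applying these facts $\ell$ times, starting from $t_m(\C_2)=t_m(\C'_2)$, gives the lemma.

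For $\recol_f$ I will use a quantifier-free translation. The new color predicate $c$ in $\recol_f(H)$ is defined in $H$ by $\bigvee_{c'\in f^{-1}(c)} c'(x)$, while the vertices, the $R_2$-edges and equality are unchanged. Substituting this definition into any formula $\varphi$ of rank at most $m$ gives a formula $\varphi^f$ of the same rank with $\recol_f(H)\vDash\varphi \iff H\vDash\varphi^f$. Hence types of rank $m$ are preserved.

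For $\join_M$ I will prove the composition statement directly, by an Ehrenfeucht--Fra\"iss\'e argument for MSO. Suppose $H\equiv_m K$ and $H'\equiv_m K'$, meaning they have the same rank-$m$ MSO type. Then Duplicator wins the $m$-round MSO game on each pair, and she plays the two games in parallel. A set move $X$ on $\join_M(H,H')$ is split as $(X\cap V(H),\,X\cap V(H'))$; she answers in each component according to her winning strategy there and takes the union of the two answers. A point move is answered in whichever component it lies. At the end we check that the chosen tuples induce a partial isomorphism of the joined structures. Inside each component this is given by the component strategies, which also preserve colors and set memberships. Between components it holds because an edge $(r,u,v)$ with $u\in H$ and $v\in H'$ exists if and only if $(r,\rightdir,\col(u),\col(v))\in M$, and symmetrically for $\leftdir$, so these edges depend only on the colors, which are preserved. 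Equality across components never holds on either side. By the Ehrenfeucht--Fra\"iss\'e theorem for MSO, $\join_M(H,H')\equiv_m\join_M(K,K')$. For our use we only need this with one argument fixed ($\Dc_i$ on both sides).

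I expect the main obstacle to be handling the join cleanly, and in particular making sure the types are taken over the signature $\mathcal S\cup C$ with colors as unary predicates. Without the colors, cross-edges could not be controlled. This is exactly why the lemma is stated for $C$-colored graphs. If one prefers syntax over games, the alternative is Courcelle's route: express $\join_M(H,H')$ as a quantifier-free interpretation of the disjoint union $H\sqcup H'$ extended with a predicate marking the left part, and invoke Feferman--Vaught for disjoint unions. Finally, the width bound plays no role, since all types are taken over the fixed finite $C$, and \cref{lemma:finitely-many-types} is not needed for this argument.
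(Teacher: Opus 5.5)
Your proof is correct, but it differs from the paper's, which gives no proof of its own: the lemma is simply imported from Courcelle--Engelfriet. There it comes out of the standard composition method. Rank-$m$ MSO equivalence of $C$-colored graphs is shown to be a congruence for each clique-width operation: for unary operations by substituting definitions of the new atomic predicates, and for disjoint union by a Feferman--Vaught-type splitting argument. A congruence is automatically compatible with substitution into a context such as $\C_1\glue\cdot$.

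Your argument has the same skeleton. You prove invariance under each single operation, then induct along the path from $\square$ to the root with the sibling subtrees fixed, and your treatment of $\recol_f$ is exactly the atomic-substitution step. The genuine difference is the $\join_M$ step. You handle it with a direct parallel Ehrenfeucht--Fra\"{\i}ss\'e game instead of expressing the join as a quantifier-free interpretation of a disjoint union with a side predicate. This keeps the proof self-contained and elementary. It also shows plainly why colors must be unary predicates in the type: cross edges depend only on colors and sides. That is the point the paper stresses right after the statement.

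The syntactic route buys more than is needed here. It yields an explicit, computable map from the types of the parts to the type of the whole, which algorithmic uses such as Courcelle's theorem require. Every use of the lemma in this paper (\cref{lemma:pumpingModels}, \cref{prop:saturating}, \cref{cor:compositionality-disjoint-union}) needs only the invariance statement.

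One bookkeeping point you leave implicit: a point move advances only one of the two component games, so each component game sees at most $m$ moves. This is harmless, because a position reached by following an $m$-round winning strategy is already a partial isomorphism at every intermediate stage.
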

	It is crucial for this statement that the MSO types of $C$-colored graphs
	are defined with MSO formulae that may refer to the colors of vertices.

	A special case of this result is for disjoint unions, obtained by applying \cref{lem:compositionality-oplus} to the two sides of a $\join_{\emptyset}$ node.
	\begin{corollary}\label{cor:compositionality-disjoint-union}
		For any $m \in \N$, the type $t_m(G \sqcup G')$ only depends on $t_m(G)$ and $t_m(G')$.
	\end{corollary}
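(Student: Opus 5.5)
The corollary follows by specializing \cref{lem:compositionality-oplus} to a decomposition whose root is a $\join_{\emptyset}$ node. Since $\join_\emptyset$ adds no edges, $\join_\emptyset(\Cc,\Cc')$ constructs exactly $G \sqcup G'$ whenever $\Cc$ and $\Cc'$ are clique-decompositions of $G$ and $G'$. The corollary concerns types of plain graphs, so before using colored types I first reduce to the monochromatic case.

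\emph{Reduction to one color.} Take any clique-decompositions of $G$ and $G'$ and add a final $\recol$ node on top of each that sends every color to a single fixed color $c_0$. The resulting decompositions $\Cc_G$ and $\Cc_{G'}$ build $G$ and $G'$ with all vertices colored $c_0$. The type of a monochromatic $C$-colored graph is determined by its uncolored type: every colored formula of rank $\le m$ becomes an uncolored formula of the same rank once each atom $c_0(x)$ is replaced by $\top$ and each other color atom by $\bot$. The same holds for the output of $\join_\emptyset$, which is again monochromatic with color $c_0$. Conversely, the uncolored type is contained in the colored type. So it suffices to work with these colored types.

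\emph{Two applications of compositionality.} Let $H,H'$ be graphs with $t_m(H)=t_m(G)$ and $t_m(H')=t_m(G')$, and build $\Cc_H,\Cc_{H'}$ in the same way; they have the same colored types as $\Cc_G,\Cc_{G'}$.
\begin{enumerate}
\item Consider the marked decomposition $\join_\emptyset(\square,\Cc_{G'})$. Gluing $\Cc_G$ or $\Cc_H$ into it and applying \cref{lem:compositionality-oplus} gives $t_m(G\sqcup G') = t_m(H\sqcup G')$.
\item Consider the marked decomposition $\join_\emptyset(\Cc_H,\square)$. Gluing $\Cc_{G'}$ or $\Cc_{H'}$ into it gives $t_m(H\sqcup G') = t_m(H\sqcup H')$.
\end{enumerate}
Chaining the two equalities yields $t_m(G\sqcup G')=t_m(H\sqcup H')$, which is the claim.

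The only delicate point is that \cref{lem:compositionality-oplus} replaces a single subtree, namely the one at the marked leaf $\square$. Changing both sides of the join therefore requires two separate gluings, one for each side. The monochromatic recoloring in the first step is needed so that the colored types match whenever the uncolored types match.
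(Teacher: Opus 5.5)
Your proof is correct and takes the same route as the paper: the paper obtains the corollary by applying \cref{lem:compositionality-oplus} to each of the two sides of a $\join_{\emptyset}$ node. Your monochromatic recoloring step makes explicit the passage between colored and uncolored types that the paper leaves implicit; it needs all four decompositions to use one common color set $C$, which you assume without saying.
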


	\Cref{lem:compositionality-oplus} informally says that clique-decompositions of the same type are interchangeable and,
	given that there are finitely many types (Lemma~\ref{lemma:finitely-many-types}),
	we can use this tool in big enough models to apply pumping techniques.
	To this purpose it is convenient to see a clique-decomposition as a tree, where
        the label of each node is the type of the graph generated by the sub-clique-decomposition
	consisting of this node and its descendants.

	Given a marked clique-decomposition $\C$, let us denote by $f_\C$ the function obtained as
	the composition of the recoloring functions $f$ of all $\recol_f$ nodes in
	the path from the marked leaf~$\square$ to the root of $\C$.
	We say that $\C$ is \emph{idempotent} when $f_{\C} \circ f_\C =f_{\C}$.
	This property will be important when constructing circuits.

	The following is a folklore result, which holds in any finite semigroup.
	\begin{lemma}[Idempotent power lemma]\label{lem:idempotent-power}
		For any map $f: [k] \to [k]$, there exists some~$n>0$ such that~$f^n$ is idempotent.
	\end{lemma}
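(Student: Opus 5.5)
The plan is to use a pigeonhole argument on the forward orbit of $f$ inside the finite set of all maps $[k]\to[k]$. Write $M=[k]^{[k]}$ for the monoid of self-maps of $[k]$ under composition. It has $k^k$ elements.

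\textbf{Step 1: the orbit eventually cycles.} Consider the sequence $f, f^2, f^3, \dots$ in $M$. Since $M$ is finite, two terms must coincide. So there are integers $1\le i<j$ with $f^i=f^j$. Set $p=j-i\ge 1$. A straightforward induction, composing both sides with $f$, gives $f^{a}=f^{a+p}$ for every $a\ge i$. Iterating, $f^{a}=f^{a+qp}$ for every $a\ge i$ and every $q\ge 0$.

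\textbf{Step 2: choose the exponent.} Take $n$ to be any multiple of $p$ with $n\ge i$, for instance $n=ip$. Since $p\ge1$, we have $n\ge i$ and $n>0$. Then
\[
f^n\circ f^n=f^{2n}=f^{n+(n/p)p}=f^n,
\]
where the last equality is Step 1 applied with $a=n\ge i$ and $q=n/p$. Hence $f^n$ is idempotent.

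No step here is a real obstacle. The only point needing care is that $n$ must satisfy two conditions at once: it must lie past the pre-period $i$, and it must be divisible by the period $p$. Choosing $n=ip$ handles both. The same argument works verbatim in any finite semigroup, which matches the remark preceding the statement.

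If an explicit bound is wanted later, for circuit sizes for example, the argument gives $i,p\le k^k$ and hence $n\le k^{2k}$. Sharper bounds are possible but should not be needed.
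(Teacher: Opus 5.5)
Your proof is correct and is essentially the paper's argument: pigeonhole yields a pre-period and a period, and the exponent equal to their product ($ni$ in the paper, $ip$ in yours) gives an idempotent power. Incidentally, your count of $k^k$ self-maps is the right one for the pigeonhole bound (the paper writes $k!+1$, which would only count permutations), though the lemma itself needs no bound.
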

	\begin{proof}
		By pigeonhole principle, there are $n,i$  such that $f^n = f^{n+i}$ with $n,i>0$ and $n+i \le k!+1$.
		Multiplying both side by~$f^{i\ell}$, we also have $f^{n+i\ell} = f^{n+i(\ell+1)}$,
		and thus $f^n = f^{n+i\ell}$ for any $\ell \in \N$.
		Now
		\[ f^{ni} = f^{n(i-1)} \circ f^n = f^{n(i-1)} \circ f^{n(i+1)} = f^{2ni}, \]
		proving that~$f^{ni}$ is idempotent.
	\end{proof}

  \subsection{Problems \texorpdfstring{$\psi$}{ψ}-under-\texorpdfstring{$\chi$}{χ}-dynamics}

	Our goal is to study the computational complexity of decision problems of the form:
	given a succinct representation of an $R_2$-graph~$G$ satisfying some formula $\chi$,
        test whether~$G$ satisfies formula $\psi$.
	The condition~$\chi$ is treated as a promise, in order to avoid considering the complexity
        of checking the input validity (even if it is actually not an issue for lower bounds).
        It is meant to enforce structural properties of the graphs under consideration, and to enforce the interpretation of
        additional relations in $R_2$.
        The main question is asked by $\psi$.

        \medskip
        \noindent
        \centerline{
        \decisionpbpromisew{$\psi$-under-$\chi$-dynamics}
        {a succinct $R_2$-graph representation $(N,D)$.}
        {$\dyna{N,D}\models \chi$.}
        {does $\dyna{N,D}\models \psi$ ?}
        {.7}
        }

       
       

        \section{Pumping on clique decompositions}
        \label{s:pumpings}
        
        In this section, we construct building blocks (pieces of clique-decompositions)
	which will be used in the reductions to prove complexity lower bounds in Section~\ref{s:reductions}.
	We begin with a general pumping lemma in Subsection~\ref{ss:pumping}.
	Then, we state two more specialized lemmas, used to prove \cref{thm:main-NP-hard,thm:main-P-hard} respectively.
	These lemmas give some constructions through pumping of both models and counter-models of the MSO formula~$\psi$ of interest.
	It is particularly important to keep in these results the ability to construct both models and counter-models of~$\psi$ of size~$n$ for infinitely many~$n$,
	i.e.\ a form of size-independence in the conclusion of the statements.

	Under union-stable restrictions (\cref{thm:main-NP-hard}),
	we can adapt the construction through saturating graphs used in~\cite{gglgopt25} (Subsections~\ref{ss:unionstable}).
	Without union-stability, we need to apply pumping to models and to counter-models in parallel (Subsection~\ref{ss:unionunstable}).
	The cw-size-independence assumption is crucial in the latter case,
	whereas it is implied by union-stability in the former one.

        \subsection{Construction of models by pumping}
        \label{ss:pumping}
	The next result is a generic pumping lemma, from which we will derive the more specific statements needed for the latter sections.
	Here, we work with graphs of clique-width~$k$, constructed with the color set~$[k]$.
	Recall that $T^{[k]}_m$ denote the finite set of MSO types of $[k]$-colored graphs of quantifier rank $m$.
	Recall also that if~$\Gamma$ is a family of (marked) clique-decompositions indexed by~$I$,
	and~$w$ is a sequence of indices in~$I$, then $\Delta^\Gamma(w)$ is obtained
	by gluing together the corresponding sequence of clique-decompositions from~$\Gamma$.

        \begin{lemma}
       	\label{lemma:pumpingModels}
	Let~$\phi$ be an MSO formula with quantifier rank~$m$.
	Suppose that~$\phi$ has a model with clique-width at most~$k$ and size $x > 2^{|T^{[k]}_m|}$.
       	Then there exist three clique-decompositions $\Gamma = \{\C_s, \C_r, \C_e\}$ (for `start', `repeat', and `end') of width at most~$k$, with~$\C_s,\C_r$ marked, such that:
	\begin{itemize}
		\item for any $\ell \in \N$, $\Delta^\Gamma(s r^{\ell} e) \vDash \phi$,
		\item $|\C_s \glue \C_e| = x$, and
		\item $\C_r$ is not empty.
	\end{itemize}
        \end{lemma}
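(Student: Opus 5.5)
The plan is to extract the three pieces from a single clique-decomposition of a large model by a pigeonhole argument on MSO types along a long branch. Then compositionality (\cref{lem:compositionality-oplus}) justifies pumping, and finally one copy of the repeated piece is absorbed into the start piece so that the size condition reads exactly $|\C_s \glue \C_e| = x$.

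\textbf{Finding the repeated piece.} Let $G \vDash \phi$ have size $x > 2^{|T^{[k]}_m|}$, and fix a clique-decomposition $\Cc$ of $G$ of width at most $k$, using colors $[k]$. Set $h = |T^{[k]}_m| + 1$, so that $x > 2^{h-1}$. By \cref{rmk:decomposition-depth}, some branch of $\Cc$ contains at least $h$ $\join$ nodes. Label each node by the type in $T^{[k]}_m$ of the $[k]$-colored graph built by its subtree. By pigeonhole there are two $\join$ nodes $u \neq v$ on this branch with the same label, where $u$ is an ancestor of $v$. Define three pieces:
\begin{itemize}
\item $\C_e$ is the subtree rooted at $v$;
\item $\C_r'$ is the subtree rooted at $u$ with the subtree at $v$ replaced by $\square$;
\item $\C_s'$ is $\Cc$ with the subtree at $u$ replaced by $\square$.
\end{itemize}
Then $\C_s' \glue \C_r' \glue \C_e = \Cc$, and all pieces have width at most $k$. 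Moreover, $\C_r'$ is non-empty: $u$ is a $\join$ node strictly above $v$, so the child of $u$ not leading to $v$ contributes at least one leaf.

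\textbf{Pumping.} The hypothesis is $t_m(\C_r' \glue \C_e) = t_m(\C_e)$. By induction on $\ell$ using \cref{lem:compositionality-oplus}, we get $t_m(\Delta(r'^{\ell} e)) = t_m(\C_e)$ for every $\ell \ge 0$: glue $\C_r'$ onto two decompositions of equal type. Applying \cref{lem:compositionality-oplus} once more with $\C_s'$ then gives $t_m(\Delta(s' r'^{\ell} e)) = t_m(\Cc)$ for every $\ell \ge 0$. Since $\phi$ has rank $m$ and does not mention colors, it belongs to the colored type of $\Cc$ (a formula without color predicates is still a formula over $\mathcal S \cup [k]$). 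Hence every $\Delta(s' r'^{\ell} e)$ satisfies $\phi$.

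\textbf{Adjusting the size condition.} Now set $\C_s = \C_s' \glue \C_r'$, which is marked, $\C_r = \C_r'$, and keep $\C_e$. For any $\ell$ we have $\Delta^\Gamma(s r^\ell e) = \Delta(s' r'^{\ell+1} e)$, so it satisfies $\phi$. Furthermore $\C_s \glue \C_e = \Cc$, so $|\C_s \glue \C_e| = x$, and $\C_r$ is non-empty.

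The only delicate point is the pigeonhole count. We must make sure the number of $\join$ nodes on the branch strictly exceeds the number of types, and \cref{rmk:decomposition-depth} gives exactly $|T^{[k]}_m|+1$ of them. We also need both selected nodes to be $\join$ nodes, which is what guarantees that $\C_r$ is non-empty.
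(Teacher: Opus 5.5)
Your proof is correct and is essentially the paper's argument: pigeonhole on the types of two $\join$ nodes along a long branch given by \cref{rmk:decomposition-depth}, followed by iterated compositionality. The only difference is bookkeeping. The paper takes $\C_e$ to be the subtree at the \emph{upper} of the two $\join$ nodes, so the extra copy of the repeated piece sits in the end piece rather than in the start piece. This gives $\C_s \glue \C_e = \Cc$ directly, and produces literally the same family of decompositions $\Delta^\Gamma(s r^\ell e)$ as yours.
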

      	\begin{proof}
      		Let $\Cc$ be a clique-decomposition of with~$k$ of a model of $\phi$ of size $x$.
		For each node~$v$ in~$\Cc$, consider the sub-decomposition~$\Cc_v$ corresponding to the subtree below~$v$.
		By \cref{rmk:decomposition-depth}, since $x > s^{|T^{[k]}_m|}$,
		there is a branch of~$\Cc$ containing at least~$|T^{[k]}_m|+r$ $\join$ nodes.
		Now by pigeonhole principle, there are two $\join$ nodes $v \neq v'$ in this branch
		with the same type $t_m(\Cc_v) = t_m(\Cc_{v'})$,
		say with~$v$ ancestor of~$v'$ in the tree.
      		
		Define now $\C_s, \C_r, \C_e$ as follows:
      		\begin{itemize}
      			\item $\C_s$ is the marked clique-decomposition obtained from~$\C$ by replacing~$v$ and all its subtree by the marked node~$\square$;
      			\item $\C_e = \Cc_v$ is the (unmarked) clique-decomposition consisting of the descendants of~$v$ and~$v$ itself;
      			\item $\C_r$ is the marked clique-decomposition obtained from~$\C_e$ by replacing~$v'$ and all its subtree by the marked node~$\square$.
      		\end{itemize}
		The definition directly gives, $\C_s \glue \C_e = \C$, hence it has exactly $x$ vertices.
		Moreover, since $v \ne v'$ and~$v$ is a $\join$ node, $\C_r$ contains at least one constant node, hence it is not empty.

		Consider the type $\tau = t_m(\C_v) = t_m(\C_{v'})$.
		The construction gives $\C_e = \C_v = \C_r \glue \C_{v'}$.
		Thus, by compositionality (\cref{lem:compositionality-oplus}),
		for any clique-decomposition~$\C'$, if $t_m(\C') = \tau$, then $t_m(\C_r \glue \C') = \tau$.
		Starting with~$\C_e$ and iteratively applying this remark gives that
		$t_m(\Delta^\Gamma(r^\ell e)) = t_m(\C_r \glue \dots \glue \C_r \glue \C_e) = \tau$.
		A final application of \cref{lem:compositionality-oplus} using~$\C_s$ then gives that
		$\Delta^\Gamma(s r^\ell e)$ has the same rank-$m$ type as $\C_s \glue \C_e = \C$.
		Thus, since the latter satisfies~$\phi$ which has rank~$m$, so does the former,
		i.e.\ $\Delta^\Gamma(s r^\ell e) \vDash \phi$ as desired.
      	\end{proof}
       
        \subsection{Union-stable \texorpdfstring{$\chi$}{χ}}
        \label{ss:unionstable}

        In this case, we can use the disjoint union to construct a saturating graph of rank $m$,
	which will become $\C_b$ in Lemma~\ref{lemma:pumpingStable}.
	The three other clique-decompositions will be obtained from Lemma~\ref{lemma:pumpingModels}.
        
        \paragraph*{Saturating graph}

	In~\cite{gglgopt25}, it is shown that for all $m \in \N$, there exists a \emph{saturating graph} $\Omega_m$ such that, 
        for any MSO formula $\psi$ of quantifier rank $m$, if a graph contains a copy of $\Omega_m$ as connected component, then either it is forced to be a model of $\psi$
	or it is forced to be a counter-model of $\psi$.
	The proof consists in constructing $\Omega_m$ as the disjoint union of many copies of graphs for each MSO types.
	The reasoning works identically for other (finite) signatures $\mathcal{S}$,
	when adding a union-stable restriction~$\chi$, and when constraining the graphs to have bounded clique-width.

        \begin{proposition}[variant of~{\cite[Proposition~4.1]{gglgopt25}}]
	\label{prop:saturating}
	Fix $k,m \in \N$ and a signature $\mathcal{S}$. Let $\chi$ be a union-stable MSO restriction.
       	There exists a graph $\Omega_{\mathcal{S}, m, k,\chi}$ such that, for every
       	MSO formula $\psi$ of rank $m$, either:
       	\begin{itemize}
       		\item for any graph $G$ of clique-width at most~$k$ such that $G \vDash \chi$, we have $G \sqcup \Omega_{\mathcal{S}, m, k,\chi} \vDash \psi$, or
       		\item for any graph $G$ of clique-width at most~$k$ such that $G \vDash \chi$, we have $G \sqcup \Omega_{\mathcal{S}, m, k,\chi} \nvDash \psi$,
       	\end{itemize}
       	and moreover
       	$\Omega_{\mathcal{S}, m, k,\chi}$ satisfies~$\chi$, and has clique-width at most~$k$.
	\end{proposition}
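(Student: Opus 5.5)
The plan is to build $\Omega$ as a large disjoint union of representatives of every realizable type, and to use the finiteness of types (\cref{lemma:finitely-many-types}) together with the compositionality of disjoint union (\cref{cor:compositionality-disjoint-union}). Let $T \subseteq T^{\emptyset}_m$ be the set of rank-$m$ types $\tau$ that are realized by some graph $H_\tau$ with $H_\tau \vDash \chi$ and clique-width at most $k$; this set is finite. Since types are taken for uncolored graphs here, I work with $t_m$ over the signature $\mathcal{S}$ only. By \cref{cor:compositionality-disjoint-union}, the map $(\sigma,\tau)\mapsto t_m(G\sqcup G')$ for $t_m(G)=\sigma$, $t_m(G')=\tau$ is a well-defined binary operation $\oplus$ on $T^{\emptyset}_m$. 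It is commutative and associative because disjoint union is, up to isomorphism. So $(T^{\emptyset}_m,\oplus)$ is a finite commutative semigroup, and $T$ is closed under $\oplus$ by union-stability of $\chi$ and because clique-width of a disjoint union is the maximum of the two.

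Next I take $\omega=\bigoplus_{\tau\in T} N\cdot\tau$ for $N$ a large multiple, chosen so that $N\cdot\tau$ is idempotent in the semigroup generated by $\tau$. Concretely, by the idempotent power argument (\cref{lem:idempotent-power}, or its standard semigroup version), for each $\tau$ there is $n_\tau>0$ with $n_\tau\cdot\tau$ idempotent, and I take $N$ a common multiple. Set $e=\omega$, realized by $\Omega=\bigsqcup_{\tau\in T} N$ copies of $H_\tau$.

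The key claim is that $e\oplus\sigma=e$ for every $\sigma\in T$. Write $\sigma$'s idempotent part: $e\oplus\sigma = e'\oplus (N\cdot\sigma)\oplus\sigma$, where $e'$ collects the other summands. Here the main obstacle appears: $N\sigma+\sigma$ need not equal $N\sigma$. The fix is to use instead $\omega=\bigoplus_\tau M\cdot\tau$ with $M$ a multiple of $N$ that is at least the index bound, e.g.\ $M=N\cdot(|T^{\emptyset}_m|!+1)$. Even so, in a cyclic semigroup $\{\tau,2\tau,\dots\}$ with a nontrivial period, adding $\sigma$ shifts within the cycle. So I would instead aim for the weaker but sufficient statement. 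Let $e=\bigoplus_{\tau\in T} N\tau$, which is idempotent as a sum of commuting idempotents. Consider the set $S=\{e\oplus\sigma : \sigma\in T\}\cup\{e\}$. This set is the maximal subgroup ($\mathcal H$-class) of $e$, because $\sigma^{\oplus N}$ is absorbed into $e$, so each $e\oplus\sigma$ has an inverse $e\oplus (N-1)\sigma$. Then I replace $\Omega$ by the realization of $e$ and decide $\psi$ by the type $e\oplus t_m(G)$.

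If the group is nontrivial this still depends on $G$, so the final step is to pick a better $e$. I take $e$ to be the idempotent of the minimal ideal of the commutative semigroup $T$, namely $e=N\cdot\bigoplus_{\tau\in T}\tau$. Its $\mathcal H$-class is the kernel group $K$. I expect the honest resolution to be that $K$ is trivial for disjoint-union types of rank $m$. The argument would be an Ehrenfeucht–Fraïssé game: for rank $m$, any $p\ge 2^{O(m)}$ copies versus $p+1$ copies of a graph are indistinguishable, since counting components is threshold-bounded in MSO of rank $m$. Hence $p\tau=(p+1)\tau$, every cyclic subsemigroup is aperiodic, and $N\tau\oplus\tau=N\tau$. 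I would prove this threshold (aperiodicity) lemma first via EF games for MSO on disjoint unions, and it is the main technical step. With it, $e\oplus\sigma=e$ for all $\sigma\in T$. Then $\psi\in e$ or $\neg\psi\in e$ gives the dichotomy, and $\Omega$ satisfies $\chi$ and has clique-width at most $k$ by union-stability and the maximum property.
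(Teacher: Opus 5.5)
Once you settle on the copy-counting threshold lemma, your argument is the paper's: $\Omega$ is a disjoint union of enough copies of one representative of each rank-$m$ type realized by a model of $\chi$ of clique-width at most $k$. Compositionality of disjoint unions then gives $t_m(G\sqcup\Omega)=t_m(\Omega)$; the paper simply imports the threshold from \cite[Lemma~4.2]{gglgopt25}, so your idempotent-power and $\mathcal{H}$-class detour is unnecessary. Only the existence of a threshold is needed, and your claimed uniform bound $p\ge 2^{O(m)}$ is false for MSO: the rank-$(m+O(1))$ sentence ``there is a set $X$ such that distinct components carry $X$-markings of distinct rank-$m$ types'' separates $p$ from $p+1$ copies of a long path with $p$ doubly exponential in $m$, so drop that quantitative claim.
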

	\begin{proof}
	  Let~$\tau_1,\dots,\tau_n$ be all the MSO types of rank~$m$ such that
	  there is a graph~$G_i$ with type~$\tau_i$, clique-width at most~$k$, and such that $G_i \vDash \chi$.
	  Note that there are only finitely many of them by \cref{lemma:finitely-many-types}.

	  Now by \cite[Lemma~4.2]{gglgopt25}, for each~$G_i$ (and indeed any graph),
	  there is some $N_i \in \N$ such that MSO formulae of quantifier rank~$m$
	  cannot distinguish the number of copies of~$G_i$ when there are at least~$N_i$ of them:
	  formally, for any $\ell,\ell' \ge N_i$, the types $t_m\left(\bigsqcup_{j=1}^\ell G_i\right)$ and  $t_m\left(\bigsqcup_{j=1}^{\ell'} G_i\right)$ are the same.

	  Now define $\Omega_{\mathcal{S},m,k,\chi} = \bigsqcup_{i=1}^n \bigsqcup_{j=1}^{N_i} G_i$.
	  First, since each connected component is some~$G_i$ which satisfies~$\chi$ and has clique-width at most~$k$, so does the whole $\Omega_{\mathcal{S},m,k,\chi}$.
	  Next, consider~$G$ a graph with clique-width at most~$k$ satisfying~$\chi$.
	  By choice of the types $\tau_1,\dots,\tau_m$, we have $t_m(G) = \tau_i$ for some~$i$.
	  Then we have
	  \[ t_m\left(G \sqcup \bigsqcup_{j=1}^{N_i} G_i\right)
	  = t_m\left(\bigsqcup_{j=1}^{N_i+1} G_i\right)
	  = t_m\left(\bigsqcup_{j=1}^{N_i} G_i\right), \]
	  where the first equality uses compositionality (\cref{cor:compositionality-disjoint-union})
	  to replace~$G$ with yet another copy of~$G_i$, and the second is by choice of~$N_i$.
	  Adding the rest of~$\Omega_{\mathcal{S},m,k,\chi}$ to this equality using again \cref{cor:compositionality-disjoint-union},
	  we find $t_m(G \sqcup \Omega_{\mathcal{S},m,k,\chi}) = t_m(\Omega_{\mathcal{S},m,k,\chi})$,
	  which implies the result.
	\end{proof}
       
       
       
	We now have all the tools for the following lemma,
	which will be used in hardness proofs under union-stable restrictions.
	The proof is similar to~\cite{gglgopt25,glp25v2}.
        \begin{lemma}
       	\label{lemma:pumpingStable}
       	Let $\psi$ and $\chi$ be MSO formulae over signature 
       	$\mathcal{S} = \{=\} \cup R_2$ 
       	with binary relations $R_2$. 
	If $\psi$ is cw-non-trivial 
       	under the restriction $\chi$ and $\chi$ is \textbf{union-stable},
       	then there exist $k \in \N$ and 
       	four clique-decompositions $\Gamma = \{\C_s, \C_g, \C_b, \C_e\}$
	(for `start', `good', `bad', `end')
       	of width at most $k+1$, with~$\C_s,\C_g,\C_b$ marked,
       	such that the following holds either for $\psi'=\psi$ or for $\psi'=\neg\psi$:
	\begin{itemize}
	  \item for any word $w \in \{g,b\}^\star$, we have $\Delta^\Gamma(s w e) \vDash \chi$,
	  \item for any word $w \in \{g,b\}^\star$, we have $\Delta^\Gamma(s w e) \vDash \psi'$ if and only if~$w$ consists only of~`$g$'s,
	  \item $\C_g$ and~$\C_b$ have the same number of vertices, and are non-empty, and
	  \item the recoloring maps $f_{\C_g}$ and~$f_{\C_b}$ are equal, and are idempotent.
        \end{itemize}
        \end{lemma}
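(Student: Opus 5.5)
We combine the saturating graph of \cref{prop:saturating} with the pumping of \cref{lemma:pumpingModels}, following~\cite{gglgopt25}. Since $\psi$ is cw-non-trivial under~$\chi$, fix $k$ such that $\psi\wedge\chi$ and $\neg\psi\wedge\chi$ both have infinitely many models of clique-width at most~$k$, and let $m$ exceed the quantifier ranks of $\psi$ and $\chi$. Let $\Omega=\Omega_{\mathcal S,m,k,\chi}$. By \cref{prop:saturating}, either $G\sqcup\Omega\vDash\psi$ for every $G\vDash\chi$ of clique-width at most~$k$, or $G\sqcup\Omega\nvDash\psi$ for every such $G$. Set $\psi'=\neg\psi$ in the first case and $\psi'=\psi$ in the second. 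Then any $\chi$-graph of clique-width at most $k$ containing $\Omega$ as a union of components fails $\psi'$, and $\psi'\wedge\chi$ still has arbitrarily large models of clique-width at most $k$.

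Apply \cref{lemma:pumpingModels} to $\phi=\psi'\wedge\chi$ with a model of size $x>2^{|T^{[k]}_m|}$. This gives $\C_s,\C_r,\C_e$ of width at most $k$ with $\Delta(sr^\ell e)\vDash\psi'\wedge\chi$ for all~$\ell$. By \cref{lem:idempotent-power}, replace $\C_r$ by $\C_r^{\glue n}$ so that $f_{\C_r}$ is idempotent; the conclusions of the pumping lemma are preserved.

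Now let $\C_g=\C_r^{\glue p}$ for a suitable $p$. Let $\C_b$ be the marked decomposition with root $\join_\emptyset(\C_r^{\glue q},\Omega')$, where $\Omega'$ is a decomposition of $\Omega$ whose vertices are sent by $\recol$ to a fresh color $k+1$ that is never joined and never recolored. This is why the width becomes $k+1$, and it ensures that $\Omega$ remains a disjoint set of components in every gluing. Because $\join_\emptyset$ does not recolor along the marked branch, $f_{\C_b}=f_{\C_r}^q$. Choosing $p,q$ as multiples of the period with $p|\C_r|=q|\C_r|+|\Omega|$ makes the sizes equal, e.g.\ by also replacing $\C_r$ by $\C_r^{\glue|\Omega|}$ first. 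Idempotence then gives $f_{\C_g}=f_{\C_b}=f_{\C_r}$.

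To verify the properties, consider $w\in\{g,b\}^\star$. The graph $\Delta(swe)$ is obtained from $\Delta(sr^{\ell}e)$ by adding disjoint copies of $\Omega$, one for each $b$. Since $\Delta(sr^\ell e)\vDash\chi$ and $\Omega\vDash\chi$, union-stability gives $\Delta(swe)\vDash\chi$. If $w\in g^\star$, the graph is $\Delta(sr^\ell e)\vDash\psi'$. Otherwise it has the form $G\sqcup\Omega$, possibly with further copies of $\Omega$ absorbed into $G$, where $G\vDash\chi$ has clique-width at most $k$ (it is $\Delta(sr^\ell e)$ plus extra $\Omega$ copies, each of clique-width at most $k$). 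By the choice of $\psi'$, it therefore fails $\psi'$.

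The main obstacle is justifying that inserting a disjoint $\Omega$ at an internal join genuinely yields a disjoint union. This needs a fresh inert color that no later $\join$ or $\recol$ touches, which costs the $+1$ in width. It also needs the bounded-clique-width hypothesis for $G$: a disjoint union of clique-width-$\le k$ graphs has clique-width $\le k$, which holds for NLC-width.
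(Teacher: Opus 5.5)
Your construction is the paper's: pump $\psi'\wedge\chi$, make the repeat block idempotent, take $\C_g$ to be a power of it and $\C_b$ a $\join_\emptyset$ of a power of it with the saturating graph sent to a fresh inert color, and conclude by union-stability and \cref{prop:saturating}. Your derivation of $\psi'$ from the dichotomy in \cref{prop:saturating}, and your check that the part beside $\Omega$ satisfies $\chi$ and has clique-width at most $k$, are more careful than the paper's text.

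One step fails as written: the size balancing. With a single copy of $\Omega$ in $\C_b$, the equation $p|\C_r| = q|\C_r| + |\Omega|$ can be solved only if $|\C_r|$ divides $|\Omega|$. Your remedy of first replacing $\C_r$ by $\C_r^{\glue|\Omega|}$ goes the wrong way. The new block has size $|\C_r|\cdot|\Omega|$, which divides $|\Omega|$ only when $|\C_r|=1$. So in general $|\C_g|\neq|\C_b|$ for every choice of $p,q$.

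The fix is to multiply $\Omega$ instead. Put $|\C_r|$ disjoint copies of $\Omega$ under the fresh color and take $q=1$, $p=|\Omega|+1$. The paper does the symmetric version: $\C''_r=(\C'_r)^{\glue|\Omega|}$, $\Omega'=\bigsqcup_{|\C'_r|}\Omega$, $\C_g=\C''_r\glue\C''_r$, and $\C_b$ the $\join_\emptyset$ of $\C''_r$ with $\Omega'$ recolored to the fresh color. This costs nothing, because your verification already absorbs extra copies of $\Omega$ into $G$ via union-stability and closure of clique-width at most $k$ under disjoint union.

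A smaller point: your intermediate claim that any $\chi$-graph of clique-width at most $k$ containing $\Omega$ as a union of components fails $\psi'$ does not follow from \cref{prop:saturating}. That proposition requires the remaining part $G$ itself to satisfy $\chi$, and union-stability does not give this in reverse. Your final verification uses the correct form, so only that sentence needs rephrasing.
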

        \begin{proof}
	Choose~$m$ to be the maximum quantifier rank of~$\psi$ and~$\chi$.
	We assume that $\Omega = \Omega_{\mathcal{S}, m,k \chi}$ given by Proposition~\ref{prop:saturating} satisfies
        $\Omega \nvDash \psi$ and we prove the Lemma for $\psi'=\psi$,
        otherwise the proof is identical for $\psi'=\neg\psi$.

	By assumption, $\psi$ is cw-non-trivial under~$\chi$, hence there is some $k \in \N$ such that
	$\psi \land \chi$ has infinitely many models of clique-width at most~$k$.
	In particular, there is a model with size more than~$2^{|T_m^{[k]}|}$, allowing to apply \cref{lemma:pumpingModels}.
	This gives three clique-decompositions $\tilde{\Gamma} = \{\C_s, \C_r, \C_e\}$ of width at most $k$
	such that $\Delta^{\tilde{\Gamma}}(s r^{\ell} e) \vDash \psi \land \chi$ for any $\ell \in \N$, and with~$\C_r$ non-empty.

	\Cref{lem:idempotent-power} applied to the recoloring map~$f_{\C_r}$
	gives some~$t$ such that $f_{\C_r}^t$ is idempotent.
	Thus, $\C'_r = \glue_{i=1}^t \C_r$ (i.e.\ $\C_r$ glued to itself~$t$ times) is idempotent.
	Next, define $\C''_r = \glue_{i=1}^{|\Omega|} \C'_r$ and $\Omega' = \bigsqcup_{i=1}^{|\C'_r|} \Omega$.
	This is to ensure $|\C''_r| = |\Omega'|$.
	Since~$\C''_r$ is just several copies of~$\C_r$ glued together,
	we still have that $\C_s \glue \C''_r \glue \dots \glue \C''_r \glue \C_e \models \psi \land \chi$.
	Similarly, since~$\Omega'$ is just several disjoint copies of~$\Omega$,
	we have $\Omega' \vDash \chi$, and $\Omega' \sqcup \mathcal{D} \nvDash \psi$ whenever $\mathcal{D} \vDash \chi$.

	Let us now define the marked clique-decompositions~$\C_g$ and~$\C_b$.
	Note that we have so far only worked with clique-decompositions of width~$k$,
	say using colors $\{0,\dots,k-1\}$, but the statement allows one additional color~$k$.
	\begin{itemize}
		\item $\C_g$ is defined as $\C''_r \glue \C''_r$.
		\item $\C_b$ is constructed by taking a clique-decomposition of width~$k$ for~$\Omega'$,
		recoloring all vertices of~$\Omega'$ with the unused color~$k$,
		then taking the disjoint union (with a $\join_\emptyset$ node) with~$\C''_r$.
	\end{itemize}

	We finally define the family~$\Gamma = \{\C_s,\C_g,\C_b,\C_e\}$.
	Since $|\C''_r| = |\Omega'|$, we clearly have $|\C_g| = |\C_b|$.
	The construction of~$\C_b$ only adds~$\Omega'$ to the side of~$\C''_r$, without recoloring the latter.
	Thus the recoloring map of~$\C_b$ is the same as that of~$\C''_r$, which is also the same as~$\C_g = \C''_r \glue \C''_r$ by idempotence.

	Finally, for a word $w \in {g,b}^*$, consider $\Delta^\Gamma(s w e)$.
	The effect of each~$g$ in~$w$ is to add two copies of~$\C''_r$ in the gluing sequence,
	while the effect of each~$b$ is to add one copy of~$\C''_r$ in the gluing,
	and one copy of~$\Omega'$ disjoint from the whole graph.
	Thus, if~$i,j$ are the number of occurrences of~$g,b$ respectively in~$w$,
	\[ \Delta^\Gamma(s w e) = 
	(\C_s \glue \underbrace{\C''_r \glue \dots \glue \C''_r}_{\text{$2i+j$ times}} \glue \C_e) \sqcup \bigsqcup_{i=1}^j \Omega'. \]
	If~$w$ contains only the letter~$g$ (i.e.\ $j=0$), this is just
	$\C_s \glue \C''_r \glue \dots \glue \C''_r \glue \C_e$,
	which we know satisfies $\psi \land \chi$ as desired.
	If however $j>0$, we still have that $\Delta^\Gamma(s w e)$ satisfies~$\chi$, because each connected component does,
	but it does not satisfy~$\psi$ by choice of the saturating graph~$\Omega$,
	proving the result.
        \end{proof}
        
	\subsection{Cw-size-independent \texorpdfstring{$(\psi,\chi)$}{(ψ,χ)}}
        \label{ss:unionunstable}
        

        \begin{lemma}
       	\label{lemma:pumpingUnstable}
       	Let $\psi$ and $\chi$ be MSO formulae over signature 
       	$\mathcal{S} = \{=\} \cup R_2$ 
       	with binary relations $R_2$, such that
	$\psi$ is cw-non-trivial under the restriction $\chi$
	and $(\psi,\chi)$ is \textbf{cw-size-independent}.
	Then there exist $k \in \N$ and six clique-decompositions $\Gamma = \{\C_{s_+}, \C_{r_+}, \C_{e_+}, \C_{s_-}, \C_{r_-}, \C_{e_-}\}$ 
       	of width at most $k$, with $\C_{s_+}, \C_{r_+}, \C_{s_-}, \C_{r_-}$ marked, such that the following holds.
       	\begin{itemize}
       		\item $\forall \ell\in\N$, $\Delta^\Gamma(s_+  r_+^{\ell}  e_+) \vDash \psi \land \chi$,
       		\item $\forall \ell\in\N$, $\Delta^\Gamma(s_-  r_-^{\ell}  e_-) \vDash \lnot \psi \land \chi$, and
       		\item $\forall \ell\in\N$, $|\Delta^\Gamma(s_+  r_+^{\ell}  e_+)| = |\Delta^\Gamma(s_-  r_-^{\ell}  e_-)|$.
       	\end{itemize}
	Moreover $\C_{r_+}$ and $\C_{r_-}$ are not empty (they have at least one constant node).
       	\end{lemma}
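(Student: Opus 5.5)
Apply \cref{lemma:pumpingModels} twice, once to $\psi \land \chi$ and once to $\lnot\psi \land \chi$, and then equalise the sizes. Let $m$ be the maximum quantifier rank of $\psi$ and $\chi$, so that $\psi\land\chi$ and $\lnot\psi\land\chi$ both have rank at most $m$. By cw-size-independence there is some $k$ and infinitely many sizes $n$ for which $\psi\land\chi$ and $\lnot\psi\land\chi$ both have a model of size $n$ and clique-width at most $k$. Pick such an $n > 2^{|T^{[k]}_m|}$, and let $G_+$ and $G_-$ be the corresponding models of size $n$.

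Applying \cref{lemma:pumpingModels} to $\psi\land\chi$ with the model $G_+$ gives $\C_s^+,\C_r^+,\C_e^+$. For every $\ell$ we have $\Delta(s r^\ell e)\vDash\psi\land\chi$, moreover $|\C_s^+\glue\C_e^+| = n$, and $\C_r^+$ is non-empty, say with $a>0$ vertices. In the same way, $G_-$ gives $\C_s^-,\C_r^-,\C_e^-$ with $|\C_s^-\glue\C_e^-| = n$ and $|\C_r^-| = b > 0$. Thus the size of $\Delta(s_\pm r_\pm^\ell e_\pm)$ is $n + \ell a$ on one side and $n + \ell b$ on the other.

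To equalise the growth rates, set $\C_{r_+} = \glue_{i=1}^{b}\C_r^+$ and $\C_{r_-} = \glue_{i=1}^{a}\C_r^-$. Both have exactly $ab$ vertices, so both are non-empty. Keep $\C_{s_\pm} = \C_s^\pm$ and $\C_{e_\pm}=\C_e^\pm$. Then $\Delta^\Gamma(s_+ r_+^\ell e_+) = \Delta(s r^{b\ell} e)$ built from the ``$+$'' pieces, so it satisfies $\psi\land\chi$. Likewise $\Delta^\Gamma(s_- r_-^\ell e_-)$ satisfies $\lnot\psi\land\chi$. Both sides have size $n + \ell ab$, which gives the third item. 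All pieces have width at most $k$, since they are sub-decompositions or gluings of width-$k$ decompositions.

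The only delicate point is having both start/end pairs of the same base size $n$. This is exactly what cw-size-independence provides, through a common size larger than the pumping threshold for a single $k$. If the two non-trivialities came with different bounds $k_+$ and $k_-$, I would take their maximum, because clique-width at most $k$ is monotone in $k$ and the threshold $2^{|T^{[k]}_m|}$ is then computed for that maximum. The rest is a direct application of \cref{lemma:pumpingModels}.
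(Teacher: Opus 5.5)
Your proof is correct and follows the paper's argument: you take a common size above the pumping threshold (supplied by cw-size-independence), apply \cref{lemma:pumpingModels} to $\psi\land\chi$ and to $\lnot\psi\land\chi$, and equalise the repeated pieces by gluing $|\C_r^-|$ copies of $\C_r^+$ and $|\C_r^+|$ copies of $\C_r^-$. Your closing remark about taking the maximum of two bounds $k_\pm$ is unnecessary, since cw-size-independence already provides a single $k$, but it does no harm.
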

        \begin{proof}
	By the cw-size-independence assumption, there is some bound $k \in \N$
	such that there are infinitely many sizes $x \in \N$ and clique-decompositions $\C,\C'$ of size~$x$ and width at most~$k$ such that $\C \vDash \psi \land \chi$ and $\C' \vDash \lnot \psi \land \chi$.

	Let~$m$ be the maximum quantifier rank among~$\psi$ and~$\chi$, and
	choose such a size~$x$ of models with $x > 2^{|T_m^{[k]}}$.
       	By Lemma~\ref{lemma:pumpingModels} applied to both $\psi \land \chi$ and $\neg\psi \land \chi$,
	we obtain two triples of clique-decompositions $\{\C_{s_+}, \tilde{\C}_{r_+}, \C_{e_+}\}$
	and $\{\C_{s_-}, \tilde{\C}_{r_-}, \C_{e_-}\}$ of width at most $k$ that satisfy
	the two first items of Lemma~\ref{lemma:pumpingUnstable},
	and with $|\C_{s_+}\glue\C_{e_+}|=|\C_{s_-}\glue\C_{e_-}|=x$.
	Let $\C_{r_+} = \glue_{|\tilde{\C}_{r_-}|} \tilde{\C}_{r_+}$ be the gluing of $\tilde{\C}_{r_+}$
	to itself $|\tilde{\C}_{r_-}|$ time, and symmetrically
	let  $\C_{r_-} = \glue_{|\tilde{\C}_{r_+}|} \tilde{\C}_{r_-}$.
       	The clique-decompositions $\Gamma = \{\C_{s_+}, \C_{r_+}, \C_{e_+}, \C_{s_-}, \C_{r_-}, \C_{e_-}\}$
	still satisfy the two first items of Lemma~\ref{lemma:pumpingUnstable},
	and furthermore we now have $|\C_{r_+}| = |\C_{r_-}|$, from which we deduce the third item.
	Since the clique-decompositions $\tilde{\C}_{r_+}$ and $\tilde{\C}_{r_-}$ are not empty,
	neither are $\C_{r_+}$ nor $\C_{r_-}$.
        \end{proof}
       
        \section{Reductions and complexity lower bounds}
        \label{s:reductions}
      	
      	We present two different kinds of complexity results depending on whether the restriction is
        union-stable or cw-size-independent: reduction from \textbf{SAT} or \textbf{UNSAT}
        to obtain $\NP$- or $\coNP$-hardness, or reduction from \textbf{CVP}
        (\emph{Circuit Value Problem}) to obtain $\Poly$-hardness, respectively.
        The latter requires to construct an instance of \textbf{$\psi$-under-$\chi$-dynamics}
        in logarithmic space (see for example~\cite{limits}).
        
        Given $\C$ and $\C'$, the $R_2$-graph constructed by $\C \glue \C'$
        contains the vertices constructed by $\C$ (its constant leaves),
        the vertices constructed by $\C'$ (its constant leaves), and no other.
        However, the situation is different regarding the arcs of the $R_2$-graph constructed by $\C \glue \C'$:
        it contains the labeled arcs constructed by $\C$, the labeled arcs constructed by $\C'$,
        but it may contain additional arcs.
        Indeed, one needs to search in $\C \glue \C'$ whether two leaves are connected,
	by keeping track of their colors while traversing the clique-decomposition until their join
	(least common ancestor).
        The purpose of the next lemma is to implement efficiently the procedure of
        computing \emph{a circuit for the $R_2$-graph} constructed by pumpings.
        This will be particularly important for cw-size-independent restrictions,
        and the idempotency of recoloring functions is crucial.

      	\begin{lemma}
      		\label{lemma:logspace}
                For any fixed triple of clique-decompositions $\Gamma = \{\C_s, \C_r, \C_e\}$
		with $\C_r$ idempotent,
                a circuit for $\Delta^{\Gamma}(s r^{\ell} e)$
		can be constructed in space logarithmic in $\ell$.
      	\end{lemma}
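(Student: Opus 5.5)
We will write down, in space $O(\log \ell)$, a circuit $D$ that takes two vertex indices $x,y$ and outputs the relations of $R_2$ holding between them in $\Delta^{\Gamma}(s r^{\ell} e)$. The total number of vertices is $N=|\C_s|+\ell|\C_r|+|\C_e|$, which has $O(\log \ell)$ bits and can be computed in logspace. We number the vertices block by block: first the leaves of $\C_s$, then the $i$-th copy of $\C_r$ for $i=0,\dots,\ell-1$, then $\C_e$. A vertex index $x$ is thus decoded into a pair $(i_x,p_x)$, where $i_x$ is a block index and $p_x$ is a leaf position inside the fixed decomposition $\C_s$, $\C_r$ or $\C_e$. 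Decoding only needs subtraction and division by the constant $|\C_r|$, together with comparisons against constants, so the circuit has size polynomial in $\log \ell$ and its description can be emitted in logspace with a constant number of $O(\log\ell)$-bit counters.

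Next we determine the adjacency between two vertices $u=(i_u,p_u)$ and $v=(i_v,p_v)$. If $i_u=i_v$, the arc is decided inside one fixed copy. Their least common ancestor lies in that copy. The colours of $u,v$ just below that ancestor depend only on $p_u,p_v$, and so does the join rule. Hence the answer is a constant table indexed by $(p_u,p_v)$ and the kind of block.

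If $i_u<i_v$, say, then $v$ lies inside the subtree rooted at the marked leaf $\square$ of copy $i_u$. The join is the node on the path from $\square$ to the root of copy $i_u$ where $u$'s branch meets that path, and this node depends only on $p_u$. The colour of $u$ at that join is again determined by $p_u$ alone. The colour of $v$ is obtained as follows. First take $v$'s colour when it exits its own block; this is a constant depending on $p_v$, and for a vertex of $\C_e$ it is simply its colour at the root of $\C_e$. Then apply $f_{\C_r}$ once for each intermediate copy $i_u+1,\dots,i_v-1$. Finally apply the partial recolouring along the path in copy $i_u$ from $\square$ up to the join node, which is again a constant map determined by $p_u$.

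The key point, and the only real obstacle, is the composition $f_{\C_r}^{\,i_v-i_u-1}$. Without further information it would require iterating a map a number of times that can be exponential in the input size, or at least would prevent a fixed small circuit. Idempotency resolves this: $f_{\C_r}^{\,j}$ equals the identity if $j=0$ and equals $f_{\C_r}$ if $j\ge 1$. So the circuit only needs to test whether $i_v-i_u\ge 2$, an $O(\log \ell)$-bit comparison, and then select between two constant maps. The remaining cases follow the same scheme: $i_u>i_v$ is symmetric, and the cases where $u$ lies in $\C_s$ or $v$ lies in $\C_e$ only change which constant tables are used.

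All the tables are finite and independent of $\ell$, since $\Gamma$ is fixed. The emitted circuit is therefore a fixed template wired to an $O(\log\ell)$-bit arithmetic front end, consisting of the decoding, a comparison, a subtraction, and multiplexers over constant tables. Producing it requires only counters over gate indices of polynomial range, and this yields the claimed logspace construction.
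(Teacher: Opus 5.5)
Your proposal is correct and follows essentially the same route as the paper. It decodes each vertex index into a (block, position) pair with $O(\log \ell)$-bit arithmetic, handles pairs in the same block by a constant table, and for different blocks uses idempotence of $f_{\C_r}$ so that any number of intermediate copies of $\C_r$ acts either as the identity or as a single application of $f_{\C_r}$. The paper splits the cases as $|i-i'|<2$ versus $|i-i'|\ge 2$ rather than $i_u=i_v$ versus $i_u\neq i_v$, but the argument is the same; your version also correctly places the join in the outer block and iterates $f_{\C_r}$ on the colour of the inner vertex, which is the right reading of the paper's somewhat garbled phrasing of this step.
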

      	\begin{proof}
		In this proof, we denote by $(\mathcal{D}_i)_{i \in \intz{\ell+2}}$ the sequence of clique-decompositions
		in the gluing $\Delta^{\Gamma}(s r^{\ell} e) = \mathcal{D}_0 \glue \mathcal{D}_1 \glue \dots \glue \mathcal{D}_{\ell +1}$,
		that is $\mathcal{D}_0 = \C_s$, $\mathcal{D}_{\ell + 1} = \C_e$ and for any
        	$i \in \intz{\ell}$, $\mathcal{D}_{i+1}$ is the $i$-th copy of $\C_r$. Hence 
		writing $v \in \mathcal{D}_i$ specifies the index of the copy of graph where $v$ belongs to
     		within $\Delta^{\Gamma}(s r^{\ell} e)$.
      	
		The vertices of $\Delta^{\Gamma}(s r^{\ell} e)$ are
		numbered from $0$ to $|\C_s|+\ell\cdot|\C_r|+|\C_e|-1$
		(thus encoded on $\O(\log\ell)$ bits),
		in the order of gluings.
      		Given any vertex $v$ in $\Delta^{\Gamma}(s r^{\ell} e)$,
		function $\mathtt{num\_copy}(v)$ outputs $i\in\intz{\ell+2}$ such that $v\in\mathcal{D}_i$,
		and function $\mathtt{relative}(v,i)$ outputs $v_i\in\intz{|\mathcal{D}_i|}$
		corresponding to $v$ within $\mathcal{D}_i$.
		Computing them requires only basic arithmetics,
		and circuits for them can be produced in space logarithmic in $\ell$,
		as detailed in~\cite{glp25circuit}.
		Now, given two vertices $v$ and $v'$
		with $i=\mathtt{num\_copy}(v)$ and $i'=\mathtt{num\_copy}(v')$,
		we consider two cases to decide whether there is an edge~$(r,v, v')$
		in $\Delta^{\Gamma}(s r^{\ell} e)$, for each $r \in R_2$.
      		\begin{itemize}
			\item If $|i-i'|<2$, then given $\mathtt{relative}(v,i)$ and $\mathtt{relative}(v',i')$
			there are only finitely many different cases to consider,
			which is immediate to implement in a circuit.
			\item If $|i-i'|\geq 2$, then the join of $\mathtt{relative}(v,i)$ and $\mathtt{relative}(v',i')$
			may require to cross multiple copies of $\C_r$,
			but since its recoloring function is idempotent,
			any number of crossings is equivalent (in terms of recoloring) to one.
			Consequently, assuming without loss of generality that $i>i'$,
                        we know that the join appears within $\mathcal{D}_i$, and
			the color of $v'$ at the join is obtained by finite computations
			(before and after the possibly many copies of $\C_r$) and one application
			of the (fixed) function $f_{\C_r}$ in-between.
      		\end{itemize}
      		
      		

		Thus, after computing $\mathtt{num\_copy}(v)=i$, $\mathtt{num\_copy}(v')=i'$,
		as well as $\mathtt{relative}(v,i)$, $\mathtt{relative}(v',i')$ --- which requires
		the circuit to implement arithmetic operations on $\O(\log \ell)$ bits ---
		deciding whether $(v,v') \in r$ requires to get the colors of $v$ and $v'$ at their join,
		which amounts to checking a finite number of cases.
      	\end{proof}
      	
      	We are now able to prove our main two results.

	\mainNP*
      	\begin{proof}
		Under the assumptions of the statement, Lemma~\ref{lemma:pumpingStable} can be applied
		to obtain four clique-decompositions $\Gamma = \{\C_s, \C_g, \C_b, \C_e\}$ satisfying
		the following for either $\psi' = \psi$ or $\psi' = \lnot \psi$:
	for any word $w \in \{g, b\}^*$, $\Delta^{\Gamma}(s w e) \vDash \chi$; and
	$\Delta^{\Gamma}(s w e) \vDash \psi'$ if and only if the letter $b$ does
	not appear in $w$. 
		Let us assume that the former holds for $\psi' = \lnot \psi$,
		and show that $\textbf{SAT}$ reduces to \textbf{$\psi$-under-$\chi$-dynamics}, proving $\NP$-hardness.
		In the other case, we would instead obtain a reduction to \textbf{$\neg\psi$-under-$\chi$-dynamics},
		and thus $\coNP$-hardness for \textbf{$\psi$-under-$\chi$-dynamics}.

                Given $S$ an instance of \textbf{SAT} on $n$ variables,
                the idea is to construct a circuit succinctly representing
                $\Delta^{\Gamma}(s w e)$,
                where $w\in\{g,b\}^{2^n}$ with $w_i=b$ if and only if $i-1$ is the binary encoding of a satisfying
                assignment for $S$. 
                To this end, we adapt the circuit constructed by Lemma~\ref{lemma:logspace}.
		As in the proof of the latter, we denote by $(\mathcal{D}_i)_{i \in \intz{\ell+2}}$
		the sequence of clique-decompositions in the gluing~$\Delta^\Gamma(swe)$.
		Given vertices~$v,v'$, the circuit again computes $\mathtt{num\_copy}(v)=i$, $\mathtt{num\_copy}(v')=i'$,
		as well as $\mathtt{relative}(v,i)$, $\mathtt{relative}(v',i')$.
		Next, it tests whether~$\Dc_i$ and~$\Dc_{i'}$ respectively are~$\Cc_g$ or~$\Cc_b$,
		by testing the \textbf{SAT} instance~$S$ on~$i-1$ and~$i'-1$.
		The argument is now exactly the same as in \cref{lemma:logspace}, with this additional case disjunction on~$\Dc_i,\Dc_{i'}$.
		Precisely, whether or not~$vv'$ is an edge depends only on (1)~the relative order of~$i,i'$,
		(2)~whether $i,i'$ are equal, differ by~1, or differ by at least~2,
		(3)~whether $\Dc_i,\Dc_{i'}$ are~$\Cc_g$ or~$\Cc_b$,
		and (4)~$\mathtt{relative}(v,i)$, $\mathtt{relative}(v',i')$.
		For each of these four elements, there is only a fixed number of choices,
		hence this can be implemented by a constant size circuit.
		The key hypothesis ensuring the above is that the recoloring function $\Dc_{i+1} \glue \dots \glue \Dc_{i'-1}$ (in the case $i' \ge i+2$)
		is always the same, independently of~$i,i'$ and~$w$,
		because each~$\Dc_j$ is either~$\Cc_g$ or~$\Cc_b$, which have the same idempotent recoloring.
                
                This implementation strongly relies on the more detailed logspace reduction
                of \cite{glp25circuit}, but Lemma~\ref{lemma:logspace}
                highlights the transformation from models described
                by tree-decompositions to models described by clique-decompositions.

		We thus obtain in logspace a circuit encoding the graph $G$ generated by $\Delta^{\Gamma}(s w e)$,
		where each letter~$w_i$ is `$b$' if~$i-1$ is a satisfying assignment for the \textbf{SAT} instance~$S$, and is~`$g$' otherwise.
		\Cref{lemma:pumpingStable} guarantees that $G \vDash \chi$,
		and that $G \vDash \psi$ if and only if~$w$ contains a~`$b$',
		i.e.\ if and only if~$S$ is a positive instance of \textbf{SAT},
		completing the reduction to \textbf{$\psi$-under-$\chi$-dynamics}.             
	\end{proof}

	\mainP*
	\begin{proof}
                If $(\psi,\chi)$ is cw-size-independent, then there exist six
                clique-decompositions 
                $\Gamma = \{\C_{s_+}, \C_{r_+}, \C_{e_+}, \C_{s_-}, \C_{r_-}, \C_{e_-}\}$ verifying the conclusion of
	        Lemma~\ref{lemma:pumpingUnstable}. Consider an instance $S$ of 
                \textbf{CVP} (\emph{Circuit Value Problem}), which is a circuit on $n$ variables
                that we want to evaluate on the input $0^n$.
                We construct in logarithmic space a succinct graph representation which is a model of $\psi$
                if $S(0^n)=1$ and a counter-model if $S(0^n)=0$.
                Applying Lemma~\ref{lemma:logspace} to both $\{\C_{s_+},\C_{r_+},\C_{e_+}\}$ and $\{\C_{s_-},\C_{r_-},\C_{e_-}\}$
                with $\ell=|S|$, in space logarithmic in $|S|$ we obtain circuits for
	        $\Delta^{\Gamma}(s_+ r_+^{\ell} e_+)$ and
	        $\Delta^{\Gamma}(s_- r_-^{\ell} e_-)$ which
                both satisfy $\chi$, and are respectively a model and a counter-model of $\psi$.
                Given that they have the same number of vertices (third item of Lemma~\ref{lemma:pumpingUnstable}),
                it suffices to plug them into a multiplexer controlled by $S(0^n)$.
                The only non-constant parts are the dependency on $\ell=|S|$
                (which provides an encoded graph of size proportional to that of its circuit)
                and the copy of circuit $S$
                with the adjunction of constant $0$ for each of its inputs,
                which are easily performed in logarithmic space
                (for more details on the space complexity, see \cite[Section~4]{glp25circuit}).
                We conclude that if $(\psi,\chi)$ is cw-size-independent
                then \textbf{$\psi$-under-$\chi$-dynamics} is $\Poly$-hard.
        \end{proof}

        Note that as the size of the circuit succinctly encoding a graph is required to have size bounded
        polynomially in terms of the size of the graph (in order to avoid hiding the complexity inside the circuit),
        it is necessary to use models and counter-models of arbitrarily large size in the previous reduction, hence the use of Lemma~\ref{lemma:pumpingUnstable}.
      
        \section{Applications and extensions}
        \label{s:applications}

	Restriction $\chi$ has two purposes.
	First, it is meant to control structural properties of the dynamics
	(the succinctly encoded graph), such as being deterministic (each vertex has out-degree one).
	Second, it allows to enforce the interpretation of additional $R_2$ symbols
	as, for example, a total or partial order.
	In this section we discuss some applications,
	and outline the necessity of cw-size-independence.

        \subsection{Example of cw-size-dependent formula}

        In Theorem~\ref{thm:main-P-hard} we introduced the notion of $(\psi,\chi)$ cw-size-independence.
        It would have been better to prove a $\Poly$-hardness result for any union-unstable restriction $\chi$,
        however this does not hold, as demonstrated by the following example.
        Let formula $\chi$ express that the graph is a cycle
        (connected, undirected, each vertex of degree two),
        and formula $\psi$ express that the cycle has even length
        (there is a valid bi-coloring).
	\begin{align*}
	  \chi =&~
            \neg[\exists X, (\exists u, u \in X \wedge \exists v, \neg v \in X) \wedge
	    (\forall u, \forall v, u \to v \Rightarrow (u \in X \Leftrightarrow v \in X))]\\
	    &\wedge [\forall u,\forall v, u \to v \Leftrightarrow v \to u]
	    \wedge [\forall u,\exists v,\exists w,\forall z,\\
            &((u \to v) \wedge (u\to w) \wedge (\neg v = w) \wedge (\neg v = z) \wedge (\neg w = z))
	    \Rightarrow (\neg u \to z)]\\
	  \psi =&~ \exists X, \forall u, \forall v, u \to v \Rightarrow (u \in X \Leftrightarrow \neg v \in X)
	\end{align*}
	This couple is cw-size-dependent, and to decide \textbf{$\psi$-under-$\chi$-dynamics}
	it suffices to check whether the least significant bit of the size $N$ is $0$ or $1$ (very easy).
      	
	\subsection{Example of \texorpdfstring{$\Poly$}{P}-complete problems}

	Many problems are $\Poly$-complete under restriction
	$\xi=(\forall x, \forall y, x \to y) \vee (\forall x, \forall y, \neg x \to y)$.
      	Restriction $\xi$ asserts that either the graph is a clique, or it is an independent set. 
      	
      	\begin{proposition}
      		\label{prop:pcomplete}
		For any cw-non-trivial MSO formula $\psi$ under $\xi$ on signature $\{=,\to\}$,
		problem \textbf{$\psi$-under-$\xi$-dynamics} is $\Poly$-complete
      	\end{proposition}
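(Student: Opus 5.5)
The proof splits into membership in $\Poly$ and $\Poly$-hardness. For the upper bound, I would use the promise $\xi$ directly. A graph satisfying $\xi$ is either the complete graph $K_N$ (with all loops) or the edgeless graph $\overline{K_N}$ on $N$ vertices. It is therefore determined by $N$ and a single bit: whether $0 \to 0$. Given $(N,D)$, evaluate $D$ on the pair $(0,0)$, which takes polynomial time by the circuit value problem. It then remains to decide whether $K_N \vDash \psi$ or $\overline{K_N} \vDash \psi$, where $N$ is given in binary. Both families have clique-width at most $1$, so I would invoke a pumping or periodicity argument. By \cref{lemma:finitely-many-types} and \cref{cor:compositionality-disjoint-union}, the type $t_m(\overline{K_N})$ is computed by iterating a fixed finite map on the finite set of types. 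The sequence of types is therefore eventually periodic in $N$. The analogous statement holds for $K_N$, using $\join$ with all edges, via \cref{lem:compositionality-oplus}. Since $\psi$ is fixed, both truth sets are ultimately periodic, and membership of $N$ is decided by a fixed finite-automaton-like test on the binary expansion of $N$ (a residue modulo a constant, plus a finite table for small $N$). This is even logspace, so the whole procedure is in $\Poly$, with the circuit evaluation being the dominant cost.

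For hardness, I would apply \cref{thm:main-P-hard}, so it suffices to show that $(\psi,\xi)$ is cw-size-independent. The models of $\xi$ of size $N$ are exactly $K_N$ and $\overline{K_N}$. Let $A=\{N : K_N\vDash\psi\}$ and $B=\{N:\overline{K_N}\vDash\psi\}$; by the periodicity above, both are ultimately periodic. Non-triviality of $\psi$ under $\xi$ gives infinitely many sizes in $A\cup B$ and infinitely many outside $A\cap B$. Size-independence means that $A \,\triangle\, B$ is infinite.

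This is where I expect the obstacle. Non-triviality alone does not force $A\,\triangle\,B$ to be infinite. For example, $\psi$ could be ``the number of vertices is even'', where $A=B$ is the set of even numbers, and this formula is cw-non-trivial under $\xi$. In that case \textbf{$\psi$-under-$\xi$-dynamics} is decided from $N$ alone, in logspace, so it is not $\Poly$-hard unless $\Poly = \mathsf{L}$.

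I would therefore handle the case split explicitly. If $A\,\triangle\,B$ is infinite, \cref{thm:main-P-hard} gives $\Poly$-hardness. Otherwise the problem depends only on $N$. My plan is to check whether the intended statement implicitly assumes size-independence, or reads $\psi$ as distinguishing the clique from the edgeless graph, as in the introduction's example `$G$ is a clique'. If so, I would state that proviso and conclude. Either way, the reduction from \textbf{CVP} works in the independent case: choose a large $N\in A\,\triangle\,B$ with $N$ of size proportional to the \textbf{CVP} instance $S$, and output the circuit $D(x,y)=S(0^n)$ (or its negation, as appropriate). This is computable in logspace.
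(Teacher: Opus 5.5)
\paragraph{Review.}
Your upper bound is correct. In the size-independent case, your direct reduction from \textbf{CVP} is also a legitimate shortcut for \cref{thm:main-P-hard}: output the constant circuit $D(x,y)=S(0^n)$ (or its negation) on a suitable number $N$ of vertices.

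The gap is in the hardness part. You never show that $A \mathbin{\triangle} B$ is infinite, and your counterexample for why it might fail is invalid. The property ``the number of vertices is even'' is not expressible by any MSO sentence over $\{=,\to\}$ on the models of $\xi$. On $K_N$ the relation $\to$ is identically true, and on $I_N$ it is identically false. So replacing every atom $x \to y$ by $x = x$ (respectively by $\neg(x = x)$) turns $\psi$ into an MSO sentence of the same rank over pure sets. Plain MSO, without modular counting predicates, cannot express the parity of an unstructured set.

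Your argument by iterating a map on the finite set of types only gives eventual periodicity, which is too weak here. What is true is that the rank-$m$ types of $K_N$ and of $I_N$ are eventually \emph{constant}. The paper proves this in \cref{lemma:cliqueindep} by an Ehrenfeucht--Fra\"{\i}ss\'e argument. You could also get it from the stabilization lemma \cite[Lemma~4.2]{gglgopt25}, used in the proof of \cref{prop:saturating}, applied to $N$ disjoint copies of a single loopless vertex.

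With eventual constancy the case split disappears:
\begin{itemize}
\item $A$ and $B$ are each finite or cofinite.
\item Infinitely many sizes in $A \cup B$ forces one of them to be cofinite.
\item Infinitely many sizes outside $A \cap B$ forces one of them to be finite.
\item Hence $A \mathbin{\triangle} B$ is cofinite.
\end{itemize}
So $(\psi,\xi)$ is always cw-size-independent, and $\Poly$-hardness holds with no proviso. As written, your proposal proves the statement only under an extra hypothesis that is in fact always satisfied, and it wrongly suggests the statement might need amending.
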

      	
      	In order to prove this proposition, we prove a much stronger claim
        regarding the models and counter-models of $\psi\wedge\xi$ and their sizes.
      	Let $K_n$ denote the clique of size $n$, and $I_n$ the independent set of
      	size $n$.
      	\begin{lemma}
      		\label{lemma:cliqueindep}
      		Let $\psi$ be a cw-non-trivial MSO formula under $\xi$ on signature $\{=, \to\}$,
      		and let $r$ be the quantifier rank of $\psi$.
      		There exists $N_r \in \N$ such that either:
      		\begin{itemize}
      			\item for any $n > N_r$, we have $K_n \vDash \psi$ and $I_n \nvDash \psi$, or inversely
			\item for any $n > N_r$, we have $I_n \vDash \psi$ and $K_n \nvDash \psi$.
      		\end{itemize}
      	\end{lemma}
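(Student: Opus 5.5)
The plan is to reduce the statement to the classical fact that rank-$r$ MSO cannot distinguish large cliques from each other, nor large independent sets from each other. The models of $\xi$ on signature $\{=,\to\}$ are exactly the graphs $K_n$ and $I_n$ for $n\ge 1$. Here $K_n$ has all arcs, loops included, and $I_n$ has none. For $n=0$ the two coincide, which is harmless. All of these have clique-width at most $1$, so cw-non-triviality of $\psi$ under $\xi$ simply says that $\psi$ has infinitely many models and infinitely many counter-models among $\{K_n,I_n : n\in\N\}$.

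The key step is the following claim: there is $N_r$ such that for all $n,n'>N_r$ we have $t_r(K_n)=t_r(K_{n'})$ and $t_r(I_n)=t_r(I_{n'})$.

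For $I_n$, we can invoke \cite[Lemma~4.2]{gglgopt25}, as already used in the proof of \cref{prop:saturating}. Since $I_n=\bigsqcup_{j=1}^n I_1$, rank-$r$ types stabilise once the number of copies is at least some $N$.

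For $K_n$, no disjoint-union structure is available, so I will argue directly with an Ehrenfeucht--Fraïssé game for MSO of $r$ rounds on $K_n$ versus $K_{n'}$. Both structures are \emph{pure sets}: $\to$ is the full relation, so it is definable from nothing. Duplicator maintains the invariant that for every Boolean combination (atom) of the sets chosen so far, the corresponding cells in the two structures have sizes that are equal or both at least $2^{r-i}$ after $i$ rounds. Points are handled as singleton sets. This is the standard argument that rank-$r$ MSO on pure sets only counts up to $2^r$. The same argument covers $I_n$ as well, which gives a uniform proof, and I would probably present it that way, taking $N_r=2^r$ or so. This game argument is the only nontrivial step, and I expect it to be routine rather than a genuine obstacle.

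Given the claim, $\psi$ has a constant truth value on $\{K_n : n>N_r\}$ and another constant truth value on $\{I_n : n>N_r\}$. Only finitely many of the structures have size at most $N_r$. Therefore, if both constants were equal, $\psi$ would have only finitely many models or only finitely many counter-models, contradicting non-triviality. The two values must therefore differ, and this yields exactly one of the two alternatives in the statement.
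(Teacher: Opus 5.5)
Your proof is correct and takes essentially the same route as the paper: an Ehrenfeucht--Fra\"{\i}ss\'e argument shows that rank-$r$ MSO cannot distinguish two large cliques, or two large independent sets, and non-triviality under $\xi$ then forces the two eventually-constant truth values to differ. Your Duplicator invariant on the cells of the Boolean algebra generated by the chosen sets and points (sizes equal, or both at least $2^{r-i}$) is a more careful version of the paper's strategy. The paper only matches the size of each newly chosen set and does not track how it intersects earlier choices.
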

      	
        Lemma~\ref{lemma:cliqueindep} means that outside of a finite number of graphs (the graphs with less than 
      	$N_r$ vertices), all the models are cliques or all the models are independent sets.
      	
      	\begin{proof}
      		Towards a contradiction, suppose that for any $N \in \N$, there exist
      		two models $A$ and $B$ of $\xi$, with $|A| > N$ and $|B| > N$, such that $A$ 
      		is a model of $\psi$, $B$ is a counter-model of $\psi$, and that
      		both are cliques or both are independent sets. 
      		For large enough $N$, the following strategy allows Duplicator to win the
      		Ehrenfeucht-Fraïssé game on any formula $\psi$ of rank $r$,
                therefore $\psi$ cannot distinguish these graphs:
      		\begin{itemize}
      			\item if Spoiler chooses a vertex, then Duplicator chooses any vertex
      			(belonging to the equivalent set if necessary);
      			\item if Spoiler chooses a set $S$ for example in $A$, then Duplicator choose 
      			a set of the same size in $B$ if $|S| \le r$ or $|V(A) \backslash S| \le r$, otherwise
      			Duplicator chooses any set of size $|S|$ in $\{r+1, \dots, |V(B)|-r-1\}$ 
			(for large enough $N$ the graph $B$ is big enough, \emph{i.e.},
      			$|V(B)|-r-1 \ge r+1$).
      		\end{itemize}
		After $r$ rounds, all the vertices and sets that have been chosen are partially isomorphic
      		in $A$ and $B$, since they form either cliques or independent sets,
		and there are too many vertices in $A$ and $B$ in order to count them
      		in only $r$ rounds.
		By \cite[Corollary~7.10]{libkin2004elements}, a property distinguishing $A$ 
      		and $B$ is not expressible in MSO-logic, this is a contradiction. We conclude that
      		if $A$ and $B$ are both cliques or both independent sets, then they are both models
      		or both counter-models of $\psi$.
      		
      		Now that each of the sets $\{K_n \mid n > N_r\}$ and $ \{I_n \mid n > N_r\}$ have been 
      		shown to be exclusively models or exclusively counter-models,
		by non-triviality of $\psi$ one of them are models, and the other are counter-models.
      	\end{proof}
      	
      	\begin{proof}[Proof of Proposition~\ref{prop:pcomplete}]
		For any cw-non-trivial $\psi$, the couple $(\psi,\xi)$ is cw-size-independent
		by Lemma~\ref{lemma:cliqueindep} since for any $n > N_r$, among the graphs $K_n$ 
		and $I_n$, there is one model and one counter-model.
		Consequently, \textbf{$\psi$-under-$\xi$-dynamics} is $\Poly$-hard by Theorem~\ref{thm:main-P-hard}. 
		Given a succinctly encoded graph $G$ of size $N$,
		knowing that it is either a clique or an independent set, and
		using properties from Lemma~\ref{lemma:cliqueindep} (where we suppose that
		ultimately all the cliques are models of $\psi$, by symmetry with the other case), we
		can decide whether $G \vDash \psi$ as follows:
		\begin{itemize}
			\item if $G$ has size lesser or equal to $N_r$, then check whether it belongs to
			  the finite list of models of $\psi$ of size less than $N_r$, and answer accordingly
			  (constant time);
			\item if $G$ has size greater than $N_r$, then
			  evaluate the circuit to check if the arrow $0 \to 0$ exists in $G$,
			  answer true if it exists and false otherwise
			  (done in polynomial time);
		\end{itemize}
		We conclude the the problem is solvable in polynomial time.
      	\end{proof}
      	
	The statement is not vacuous, as there are formulae
	which are cw-non-trivial under restriction $\xi$,
	for example
	$\psi_1 = \exists x, \exists y, x \to y$ and
	$\psi_2 = \forall x, \forall y, x \to y$.

      	
        \subsection{Extensions of the result}


	It is possible to add unary relations $R_1$ to the signature $\mathcal{S}$
	and to the structures, without affecting the constructions and results
	leading to \cref{thm:main-NP-hard,thm:main-P-hard}.
	These unary relations are encoded in a circuit with
	$\lceil\log_2(N)\rceil$ inputs and $|R_1|$ outputs.


	For any $q \geq 2$, the restriction to graphs of size $q^n$ for some $n$
	(corresponding to $q$-uniform ANs)
	is meaningful in the context of discrete dynamical systems,
	but cannot be expressed in MSO.
	Nevertheless, the study on the size of models and counter-models obtained by pumping
	(arithmetic progressions) presented in~\cite{glp25v2} still holds,
	and \cref{thm:main-NP-hard,thm:main-P-hard} remain identical under the restriction to $q$-uniform ANs
	for any fixed $q \geq 2$.

	\subsection{Is there a loop on the minimum vertex?}
	Recall that succinct graph encodings very naturally correspond to automata networks,
	vertices being AN configurations and edges being transitions.
      	One of the goals of adding a restriction~$\chi$ in our results was to enforce
	some conditions which are very natural from this AN point of view.
	For instance, one may require the AN to be deterministic,
	i.e.\ require the (directed) graph to have out-degree exactly one on all vertices,
	and thus recover results from~\cite{ggpt21}.

	In an AN, one typically considers given some initial configuration~0.
	It is natural to ask problems about this~0, e.g.\ whether it is a fixed point
	(meaning whether vertex~0 in the graph has a loop).
	There are however several different ways to interpret this initial configuration~0,
	and the MSO restriction~$\chi$ is useful to express some of them.
	First the vertex~`0' can simply mean the one encoded by the bitstring 0\dots0.
	Then, testing if this vertex has a loop is easily $\Poly$-complete,
	as it is precisely equivalent to evaluating the given circuit on the zero input.
	This problem however does not fit the \textbf{$\psi$-dynamics} setting,
	as a formula~$\psi$ cannot refer to the bitstring encoding of vertices.

	An alternative is to add a second symbol~$\le$,
	use the restriction~$\chi$ to require~$\le$ to be a linear ordering,
	and inquire about the minimum of~$\le$
	(such a $\chi$ is not union-stable, hence not in the scope of Theorem~\ref{thm:main-NP-hard}).
      	However, the way to identify configuration $0$ in terms of logic is essential here, 
      	and can lead to different complexity results as shown below.

      	\paragraph*{Minimum configuration.} If $0$ is a syntactic sugar for the 
      	global minimum configuration, given some relation $\le$ restrained by $\chi$ to be an order, 
      	then the effective formula $\psi$ would be 
        $\exists x, (\forall y, x \le y) \wedge x \to x$.
        In this case, \textbf{$\psi$-under-$\chi$-dynamics} is hard,
        because intuitively it first requires to find the minimum configuration $x$ to then check that it has a loop
        (the existence of the minimum is ensured by the promise $\chi$).
        Formally, in this case we prove the following result.

        \begin{proposition}\label{prop:minimum}
          Let $\psi=\exists x, (\forall y, x \le y) \wedge x \to x$
          and $\chi$ express that $\leq$ is a total order,
          with $R_2=\{\to,\leq\}$. Then \textbf{$\psi$-under-$\chi$-dynamics}
          is $\NP$-hard and $\coNP$-hard.
        \end{proposition}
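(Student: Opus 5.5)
We give two direct reductions from \textbf{SAT} to the problem and from \textbf{SAT} to its complement, rather than going through \cref{thm:main-NP-hard}, since $\chi$ is not union-stable. The underlying idea is that $\psi$ asks whether the $\le$-minimum has a loop, and we can hide which vertex is the minimum behind a \textbf{SAT} instance.

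Let $S$ be a \textbf{SAT} instance on $n$ variables. Take $N=2^n+1$ vertices, where vertex $0$ is special and vertices $1,\dots,2^n$ correspond to the assignments $a=i-1$.

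For $\NP$-hardness, the circuit for $\to$ puts a loop on vertex $i\ge1$ exactly when $S(i-1)$ is true, and no loop on vertex $0$. The circuit for $\le$ realizes the following total order. If $S$ is satisfiable, the minimum should be some satisfying assignment; otherwise it should be vertex $0$. We cannot locate the first satisfying assignment efficiently, so we use a pairwise rule that still yields a total order: sort vertices lexicographically by the key $(b(v),v)$, where $b(v)=0$ if $v\ge1$ and $S(v-1)$ holds, $b(0)=1$, and $b(v)=2$ otherwise. Given $u,v$, the circuit computes both keys with two copies of $S$ and compares them. A lexicographic comparison on distinct keys is a total order, so $\chi$ holds.

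The minimum is then the smallest satisfying assignment if one exists, and vertex $0$ otherwise. So there is a loop on the minimum if and only if $S$ is satisfiable. This reduction is polynomial-time; logspace also works, since the circuit consists of copies of $S$ plus comparators on $n$ bits.

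For $\coNP$-hardness, we reuse the same order and change only the loops: vertex $0$ gets a loop, and vertices $i\ge1$ get none. Then $\psi$ holds if and only if the minimum is $0$, that is, if and only if $S$ is unsatisfiable. This reduces \textbf{UNSAT} to the problem.

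The only delicate point is that the order circuit must define a genuine total order on all pairs, including pairs of satisfying assignments. Breaking ties by vertex index handles this. Nothing else is nontrivial, as both relations are computed by constant-overhead circuits around copies of $S$.
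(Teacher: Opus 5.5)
Your proof is correct and is essentially the paper's argument: a circuit orders all satisfying valuations before all falsifying ones, breaking ties by integer comparison, and places loops on the satisfying vertices for $\NP$-hardness and on the complementary set for $\coNP$-hardness. The only difference is your extra special vertex $0$ (with $N=2^n+1$), which is harmless but unnecessary: the paper takes $N=2^n$ and, for $\coNP$-hardness, puts the loops on the falsifying valuations, so that when $S$ is unsatisfiable the minimum is valuation $0\cdots0$, which then carries a loop.
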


        \begin{proof}
          Let $S$ be an instance of \textbf{SAT} on $n$ variables.
          We construct the circuit $D$ of a graph on $N=2^n$ vertices,
          such that a vertex label $x$ on $n$ bits is interpreted as a valuation
          and we denote $S(x)\in\{0,1\}$.
          Given two vertex labels $x$ and $y$, 
          circuit $D$ outputs two bits (one for $\to$ and one for $\leq$) as follows:
          \begin{itemize}
            \item if $x=y$, then output $S(x)$ for relation $\to$;
            \item if $x\neq y$, then output $0$ for relation $\to$;
            \item if $S(x)=1$ and $S(y)=0$, then output $1$ for relation $\leq$;
            \item if $S(x)=0$ and $S(y)=1$, then output $0$ for relation $\leq$;
            \item if $S(x)=S(y)$, then output the integer comparison of $x$ and $y$ for relation $\leq$.
          \end{itemize}
          Relation $\leq$ is indeed a total order, with vertex labels interpreted as integers
          but with all satisfying valuations lesser than all falsifying valuations of $S$.
          Given that only the satisfying valuations bear a loop,
          $S$ is satisfiable if and only if the minimum vertex has a loop.
          The circuit $D$ can be produced in polynomial time, hence this proves the $\NP$-hardness.
          For the $\coNP$-hardness, simply put the loops on falsifying valuations.
        \end{proof}
      	
      	\paragraph*{Unary predicate.}
	An alternative encoding of a distinguished configuration using only MSO definable constraints
	is to add a unary predicate $\mathsf{zero}$, and restrict it with $\chi$ by allowing
        exactly one vertex to verify $\mathsf{zero}$. We shall call that vertex~`0'.
        Formula $\chi$ remains up-to-isomorphism, hence deciding whether
        $\psi = 0 \to 0$ holds still requires to find which configuration $x$ verifies $\mathsf{zero}(x)$,
        to then check that is has a loop.
        Nevertheless, in this case the promise $\chi$ provides an easy check, and we have the following.

        \begin{proposition}\label{prop:zero}
          Let $\psi=\exists x,\mathsf{zero}(x)\wedge x\to x$
          and $\chi=\exists!x,\mathsf{zero}(x)$,
          with binary predicate $\to$ and unary predicate $\mathsf{zero}$.
          Then \textbf{$\psi$-under-$\chi$-dynamics} is in $\NP\cap\coNP$.
        \end{proposition}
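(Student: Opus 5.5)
We prove membership in $\NP$ and in $\coNP$ separately, each time using the promise $\chi$ to guarantee that exactly one vertex satisfies $\mathsf{zero}$. Recall that the input is a pair $(N,D)$, where $D$ now has an extra block of outputs. Following the extension described above, this block is a circuit with $\lceil\log_2 N\rceil$ inputs computing the unary predicate $\mathsf{zero}$, alongside the output for $\to$ on pairs of vertices. All witnesses below are vertex labels on $\lceil\log_2 N\rceil$ bits, hence of size polynomial in the input. Every verification step is a constant number of circuit evaluations, hence polynomial time.

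For $\NP$ membership, the certificate is a vertex $x<N$. The verifier evaluates the unary part of $D$ on $x$ and checks that $\mathsf{zero}(x)$ holds. It then evaluates the binary part of $D$ on $(x,x)$ and checks that $x\to x$. This is exactly the existential formula $\psi$, so soundness and completeness are immediate; the promise is not even needed here.

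For $\coNP$ membership, we show that the complement (under the promise) is in $\NP$. The certificate is again a vertex $x<N$, and the verifier checks that $\mathsf{zero}(x)$ holds while $x\to x$ does not. Completeness: if $\dyna{N,D}\nvDash\psi$, then by the promise $\chi$ there is some (unique) $x$ with $\mathsf{zero}(x)$, and it has no loop, so $x$ is a valid certificate. Soundness: if such an $x$ exists, then by the uniqueness part of $\chi$ it is \emph{the} zero vertex, so no vertex satisfies both $\mathsf{zero}$ and has a loop, and $\dyna{N,D}\nvDash\psi$.

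The only delicate point is the soundness argument for the $\coNP$ side, which is exactly where the uniqueness in $\exists!$ is used. Without the promise, a certificate $x$ could be one of several zero vertices, and a different zero vertex could still carry a loop. This contrasts with \cref{prop:minimum}, where certifying minimality would require a universal check and the two-sided hardness holds instead.
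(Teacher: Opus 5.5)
Your proof is correct and matches the paper's argument. For $\NP$ you guess a vertex $x$ and check $\mathsf{zero}(x)\wedge x\to x$; for the complement you guess $x$ and check $\mathsf{zero}(x)\wedge\neg(x\to x)$, with the promise $\chi$ making this certificate sound, and you are only slightly more explicit than the paper in separating where existence (completeness) and uniqueness (soundness) of the zero vertex are used.
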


        \begin{proof}
          For the $\NP$ upper bound: guess $x$ and then check that $\mathsf{zero}(x)$ and $x\to x$.
          For the $\coNP$ upper bound, the complementary problem is \textbf{$\neg\psi$-under-$\chi$-dynamics}
          (with formula $\neg\psi=\forall x,\neg\mathsf{zero}(x)\vee\neg x\to x$
          and the same promise that $\chi$ holds).
          A polynomial time non-deterministic algorithm is simply:
          guess $x$ and then check that $\mathsf{zero}(x)$ and $\neg x\to x$.
          It correctly solves \textbf{$\neg\psi$-under-$\chi$-dynamics} because
          there is a unique $x$ such that $\mathsf{zero}(x)$, hence for this $x$
          we must have $\neg x\to x$ for the instance to be positive.
        \end{proof}

      	\paragraph*{Explicit AN configurations.}
	The previous two encodings of the initial configuration~0 are expressed only with the MSO~constraint~$\chi$.
	Thus, they are inherently up to isomorphism, and independent of the bitstring encoding of this configuration~0.
	This may seem natural from a succinct encoding point of view, where the bitstring encoding is mainly an implementation detail,
	but from the AN perspective, this bitstring becomes much more meaningful,
	and one may for instance care about configurations where all bits are the same.

        One way to circumvent this limitation may be to add outside constraints, not expressible in logic.
	For instance, by adding a constant~0 (which~$\psi$ can refer to), and requiring it to be interpreted as the vertex with bitstring~0,
	or adding a predicate~$\le$ and requiring it to be interpreted as the lexicographic ordering on bitstrings.
	An alternative may be to consider a family of unary predicates~$\mathsf{s}_i$ for $i\in\N$ where~$\mathsf{s}_i(x)$ indicates that the~$i$th bit of~$x$ is~1.
	The issue of course is that this is an unbounded number of predicates, depending on the circuit size.
        The logics should thus allow to quantify, or fix, values for $i$.
        Then configuration $0$ is expressed as $z$ in $\forall i\in\N, \mathsf{s}^0_i(z)$.

	We leave as open question the investigation of meta-hardness results under such constraints that are not expressed in logic.
	It is likely that such result would be highly dependent on the specific choice of this constraint.

	\section{On bounded twin-width and bounded degree}
        \label{s:bounded}

        Characterizing the complexity of deciding a non-trivial yet cw-trivial 
        formula is still open. Indeed, with MSO formula, we do not always have a set
        of models that is of bounded clique-width. Regarding this matter,
	\cite[Theorem~7.2]{gglgopt25} shows that there are non-trivial formula~$\psi$ for which,
        under a reasonable complexity assumption, \textbf{$\psi$-dynamics} cannot be \NP or \coNP hard.
	(\textbf{$\psi$-dynamics} is just \textbf{$\psi$-under-$\chi$-dynamics} for $\chi=\top$,
	that is to say without restriction).

	The complexity assumption used in \cite[Theorem~7.2]{gglgopt25} uses the notion of
	\emph{robustly often polynomial} algorithms of Fortnow and Santhanam~\cite{FS2017robust}.
	\begin{definition}[Robustness]
		A set $M$ of integers is called \emph{robust} if:
		$$\forall k, \exists \ell \ge 2: \{\ell, \ell+1, ..., \ell^k\} \subset M.$$
		In particular, a robust set is infinite and contains arbitrarily long intervals of integers.
	\end{definition}
	A problem~$P$ is called \emph{robustly often polynomial} if there is a robust set~$M$
	and a polynomial time algorithm which solves~$P$ for any instance whose size is in~$M$.
        \begin{theorem}[{\cite[Theorem~7.2]{gglgopt25}}]%
        	\label{theorem:cwtrivial}
                There is a non-trivial first-order sentence~$\psi$ such that,
		if \textbf{$\psi$-dynamics} is \NP- or \coNP-hard,
		then \textbf{SAT} is robustly often polynomial.
        \end{theorem}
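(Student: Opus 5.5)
The statement to prove is \cref{theorem:cwtrivial}, quoted from~\cite{gglgopt25}. The plan is to exhibit a first-order sentence~$\psi$ whose spectrum and co-spectrum are both infinite, but with models so sparse in size that any hardness reduction would collapse \textbf{SAT} on a robust set of sizes.

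\textbf{Choice of $\psi$.} Take a sentence~$\psi$ of the form $\psi = \psi_0 \wedge \theta$. Here $\theta$ is an FO sentence forcing the graph to encode a computation (for instance a grid-like structure simulating a Turing machine). Its finite models exist exactly for sizes in a very sparse set $S=\{s_1<s_2<\dots\}$ with $s_{j+1} \ge 2^{2^{s_j}}$, say a tower function, which is definable by a spectrum argument since spectra of FO sentences include all such time-constructible sets. The counter-models of~$\psi$ are then all graphs of sizes outside~$S$, together with non-models of size in~$S$. So $\psi$ is trivially non-trivial: infinitely many models and infinitely many counter-models.

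\textbf{Contradiction with hardness.} Suppose $f$ is a polynomial-time reduction from \textbf{SAT} (the \textbf{UNSAT} case is symmetric) to \textbf{$\psi$-dynamics}, running in time $n^c$. On an instance~$\varphi$ of size~$n$, $f(\varphi)=(N,D)$ with $\log N \le n^c$, so $N\le 2^{n^c}$. We split by~$N$.
\begin{itemize}
\item If $N\notin S$, then the answer is ``no'' immediately.
\item If $N\in S$ and $N$ is small, say $N \le n^{c'}$ for a suitable~$c'$, we decode the whole graph from~$D$ in polynomial time and check~$\psi$ by brute force, since $\psi$ is FO.
\item Otherwise $N = s_j$ is large but still at most $2^{n^c}$, so by sparsity $s_{j-1}\le \log\log N$ is tiny.
\end{itemize}
The key point is that, for $n$ in a suitable interval, every reachable $N\in S$ falls in the ``small'' regime. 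Define the intervals $[\ell,\ell^k]$ with $\ell$ just above $s_{j}$ and $\ell^k$ below roughly $\log s_{j+1}$. By the tower growth such intervals exist for every~$k$. For $n$ in such an interval, $N\le 2^{n^c}$ is less than $s_{j+1}$, so the only possible elements of~$S$ are $s_1,\dots,s_j\le n$, each polynomial. The resulting algorithm therefore decides \textbf{SAT} in polynomial time on all sizes in the union of these intervals, which is a robust set. This also needs a padding remark: the reduction may blow up sizes polynomially, but robustness absorbs polynomial ranges $\{\ell,\dots,\ell^k\}$.

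\textbf{Main obstacle.} The delicate step is choosing~$\theta$ so that its spectrum is both FO-definable and grows at least as fast as a tower. I would invoke the classical fact, due to Jones–Selman and Fagin, that spectra are exactly $\mathsf{NE}$ sets, and take $S$ to be a tower-of-exponentials set, which is in~$\mathsf{E}$. I would also double-check that $\psi$ itself has models only in~$S$, so that the answer ``no'' off~$S$ is sound. If ``no'' for $N \notin S$ fails because of the orientation of models and counter-models, I would swap to~$\neg\psi$, which yields the \textbf{UNSAT}/\coNP\ case by the same argument.
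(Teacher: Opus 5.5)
Your argument is correct, and it takes a more direct route than the paper's. The paper only cites \cite[Theorem~7.2]{gglgopt25} here, but reuses its proof skeleton for \cref{theorem:nontrivial}. That skeleton goes through the meagerness set of the reduction. \cref{lem:meager-algo} solves \textbf{SAT} on meager sizes. The generic dichotomy \cref{lem:robust-prim-rec} says that either the meagerness set is robust, or $\spec(\psi)$ contains an increasing primitive recursive sequence. The second alternative is then excluded by choosing a spectrum built from the Ackermann function, which contains no such sequence.

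You instead fix the reduction first, with running time $n^c$, and exhibit the robust set explicitly. For $s_j < n$ and $n^c < 2^{s_j} = \log s_{j+1}$, the output satisfies $N < 2^{n^c} < s_{j+1}$. So a positive instance can only be sent to a model of size at most $s_j < n$. In effect you prove directly that these intervals lie in the meagerness set, bypassing \cref{lem:robust-prim-rec} altogether.

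What your route buys is a self-contained argument, and the observation that tower-type gaps suffice: any spectrum with $\log s_{j+1}$ superpolynomial in $s_j$ works, with no need for Ackermann-level sparsity. What the modular route buys is that the only property of $\psi$ it uses is independent of the reduction (no increasing primitive recursive subsequence in the spectrum). That lets \cref{lem:meager-algo,lem:robust-prim-rec} be reused as black boxes, for instance in the planar strengthening.

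A few small points to tidy:
\begin{itemize}
\item The upper end of your intervals must lie below $(\log s_{j+1})^{1/c}$, not roughly $\log s_{j+1}$. This is harmless, since $2^{s_j}$ exceeds $(s_j+1)^{kc}$ for large $j$.
\item Jones--Selman yields a sentence over an arbitrary vocabulary, whereas \textbf{$\psi$-dynamics} needs one over graphs. You should cite a version for a single binary relation, or, up to unary predicates, use \cref{cor:planar-spectrum-constructible} with a suitable tower-like $h$.
\item Your test ``$N \notin S$'' is not needed. On these sizes, rejecting whenever $N > n$ and brute-forcing otherwise is already correct.
\end{itemize}
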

	That is, encoding \textbf{SAT} into this \textbf{$\psi$-dynamics} problem
	would yield a polynomial time algorithm for \textbf{SAT} for a robust set of input sizes,
        which is unlikely to be possible.
	Thus the \NP- or \coNP-hardness result of \cref{thm:main-NP-hard} can likely not
	be generalised to formulae~$\psi$ which are merely non-trivial (instead of cw-non-trivial),
	even if one only considers FO formulae.

	Then, we can question if similar \NP- or \coNP-hardness can be proved
	under some hypothesis intermediate between cw-non-triviality, and just non-triviality.
	Our next result shows that this is likely not possible:
	\cref{theorem:cwtrivial} still holds even when considering only planar bounded degree graphs.
	Remark that it implies that clique-width in \cref{thm:main-NP-hard} cannot be replaced by another graph parameter such as \emph{twin-width}, because planar graphs
        have bounded twin-width \cite{BKTW2022twinwidth,HJ2023twinwidth}.

	Let~$\Pc_4$ denote the class of all planar graphs with maximum degree at most~4.
	\begin{theorem}[\cref{thm:intro-planar} restated]\label{theorem:nontrivial}
	  There is a first-order formula~$\psi$ that is $\Pc_4$-non-trivial,
	  such that if \textbf{$\psi$-dynamics} is \NP- or \coNP-hard,
	  then \textbf{SAT} is robustly often polynomial.
	\end{theorem}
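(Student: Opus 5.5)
Following the strategy behind \cref{theorem:cwtrivial}, the plan is to build an FO sentence $\psi$ whose models all have sizes in an extremely sparse set $S \subseteq \N$. Any hard instance with $N \notin S$ is then a trivial counter-model, so a reduction from \textbf{SAT} only helps on a sparse set of instance sizes. I will take $\psi = \psi_0 \vee \psi_1$. Here $\psi_0$ is a simple FO property with infinitely many models and counter-models in $\Pc_4$ at every large size; its only role is to make $\psi$ non-trivial on $\Pc_4$. The sentence $\psi_1$ forces a rigid planar degree-$4$ structure, namely a grid encoding the computation of a fixed Turing machine $M$ that runs for a tower-type number of steps. Concretely, $\psi_1$ will say that the graph is a disjoint union of a square grid of side $s$ (one row per time step of $M$, cells decorated by small planar degree-$\le 4$ gadgets encoding tape symbols and head state) and something else. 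The FO-checkable local constraints are: every vertex has a grid-like neighbourhood, the first row is the initial configuration, consecutive rows respect $M$'s transition function, and the corners are unique. Since local consistency plus rigid corners force a genuine computation table, models of $\psi_1$ can only exist when $s$ equals the running time of $M$ on empty input (or on unary inputs), up to bounded padding. Choosing $M$ with tower-exponential running times makes $\spec(\psi_1)$ as sparse as needed, for instance with $\tau(k+1) \ge 2^{\tau(k)}$ between consecutive admissible sizes.

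I would actually simplify by dropping $\psi_0$ and building non-triviality into the structure itself. Let $\psi$ be ``the graph contains such a valid computation grid component, and the computation's output bit is $1$''. Alternatively, $\psi = \psi_1 \wedge \theta$ with $\theta$ a parity property of an extra gadget attached to a corner, so that at each admissible size there is both a model and a counter-model in $\Pc_4$. Every counter-model outside the spectrum is then also allowed, and non-triviality in $\Pc_4$ is immediate from the infinitely many admissible sizes. The resulting graphs have maximum degree $4$ and are planar, because a grid with pendant decorations is planar.

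The complexity argument then follows \cite{gglgopt25}. Suppose there is a polynomial reduction $f$ from \textbf{SAT} (or \textbf{UNSAT}) to \textbf{$\psi$-dynamics}, and fix an instance $x$ of length $n$. The output $(N,D)$ satisfies $|D| \le n^c$. If $N \notin \spec(\psi)$, the answer is no. Otherwise $N = \tau(k)$ is admissible. Because the gaps grow as a tower, there is a robust set of lengths $n$ for which the only admissible $N$ with $\log N \le n^c$ satisfy $N \le n^{c'}$. For such $N$ the encoded graph is polynomially small, so we expand it and model-check $\psi$ directly. This gives a polynomial algorithm for \textbf{SAT} on those lengths. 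The robustness condition, $\{\ell,\dots,\ell^k\} \subseteq M$ for every $k$, is met by choosing $\ell$ just above $\log \tau(k)$, since the next admissible size is astronomically larger.

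The main obstacle is the first step. We need an FO sentence over planar degree-$\le 4$ graphs whose spectrum is provably this sparse, which means first-order logic must enforce global rigidity of the grid, in particular that it is a single square and that rows and columns align. Planarity and bounded degree rule out the usual trick of adding long-range edges. My first attempt would be local tile constraints in the style of Wang tiles, together with unique corner markers and degree/boundary conditions, arguing that a connected planar degree-$4$ graph which is locally grid-like everywhere with exactly four corners must be a rectangular grid. Twisted or toroidal configurations are excluded by planarity plus the boundary markers. The computation encoding in the rows then pins down both dimensions.
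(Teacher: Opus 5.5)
\textbf{The gap is in the step you flag as the main obstacle.} You need a first-order sentence with infinitely many models in $\Pc_4$ whose spectrum is provably sparse over \emph{all} graphs, since \textbf{$\psi$-dynamics} carries no planarity promise. Your sketch does not deliver this, and the mechanism it relies on cannot work as stated.

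\begin{itemize}
\item Phrasings like ``a disjoint union of a grid and something else'' or ``the graph contains a valid computation grid component'' make $\spec(\psi)$ contain every large integer.
\item More fundamentally, FO expresses neither planarity nor connectivity. So you cannot rule out toroidal or extra components ``by planarity plus boundary markers'': non-planar models of $\psi$ still count in the spectrum that your complexity argument uses.
\item By Hanf locality, on bounded-degree graphs a rank-$q$ sentence only counts $r$-neighbourhood types up to a threshold. Hence if $G \models \psi$ and $H$ is a finite graph all of whose $r$-neighbourhood types occur often enough in $G$, then $G \sqcup H \models \psi$, and $|G|+|H| \in \spec(\psi)$.
\item So your local rules must themselves forbid every finite closed configuration built from frequently occurring interior patterns, for example a torus. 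A counter that can never wrap around is one way to do this. Unique corners do not help here.
\end{itemize}

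Proving this is exactly the non-trivial content of the Dawar--Kopczy\'nski theorem the paper invokes (\cref{thm:planar-spectrum}). Via \cref{cor:planar-spectrum-constructible,cor:PrimRecMapSpec} it gives an FO sentence with spectrum $\exp(A(\N))$, $A$ the Ackermann function, all of whose models lie in $\Pc_4$. Your proposal replaces this theorem by an unproven heuristic. A smaller point: attaching pendant decorations to interior grid vertices, which already have degree $4$, breaks the degree bound. Symbols would have to be encoded by replacing vertices with gadgets.

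\textbf{Your complexity argument is a legitimate, more direct alternative}, replacing the paper's meagerness dichotomy (\cref{lem:meager-algo,lem:robust-prim-rec}), but the parameters are wrong.
\begin{itemize}
\item At lengths $n$ ``just above $\log\tau(k)$'', the reduction may output $N=\tau(k)\approx 2^n$, which you cannot expand.
\item Write $n^c$ for the bound on $\log N$ and $n^{c'}$ for the largest graph your fixed algorithm can expand. The usable lengths are those with $\tau(k)\le n^{c'}$ and $n^c<\log\tau(k+1)$, that is $n\in[\tau(k)^{1/c'},(\log\tau(k+1))^{1/c})$.
\item With $\tau(k+1)=2^{\tau(k)}$ this window is $[\ell,\ell^{c'/c})$ for a fixed exponent, which is not robust for a single polynomial-time algorithm.
\item You need gaps such as $\log\tau(k+1)\ge\tau(k)^{k}$; the Ackermann-based spectrum suffices.
\end{itemize}
With such gaps the algorithm is: on those lengths, declare the graph a counter-model whenever $N>n^{c'}$, and otherwise expand it and model-check. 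No separate spectrum-membership test is then needed.
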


        As in the reasoning inspiring the current proof \cite{gglgopt25}, 
        the notion of spectrum is crucial (Definition~\ref{def:spectrum}).
        We use the following result of Dawar and Kopczyński.
        \begin{theorem}[{\cite[Theorem~7.1]{dawar2024spectra}}]
                \label{thm:planar-spectrum}
                Let~$S \subset \N$ be such that there is a non-deterministic Turing machine that,
		given the binary representation of~$N$,
                tests whether~$N \in S$ in time~$O(N^{1-\epsilon})$ and space~$O(\log N)$.
                Then there is a first-order sentence~$\psi_S$ over the language of colored
		graphs\footnote{In~\cite{dawar2024spectra}, the sentence~$\psi_S$ uses two kinds of directed edges.
		This can be replaced by adding a bounded number of unary predicates.}, such that:
                \begin{enumerate}
                        \item the spectrum of~$\psi_S$ is exactly~$S$; and
                        \item every model of~$\psi_S$ is a colored planar graph with maximum degree at most~4.
                \end{enumerate}
        \end{theorem}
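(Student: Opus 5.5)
This is the result of Dawar and Kopczy\'nski and is used here as a black box; the following is only an outline of how I would reconstruct it. The idea is to build, for each $N \in S$, a colored planar graph of maximum degree~4 that \emph{certifies} $N \in S$ through its own size. First-order logic cannot count, but it can check local consistency. So the sentence $\psi_S$ will describe grid-like gadgets whose local rules force the global shape. Since planar grids have maximum degree~4 and local conditions are first-order, the two requirements of the statement can hold simultaneously, and the main work is size bookkeeping.

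The first step is a \emph{counter gadget}. I would use a grid of width $w \approx \log N$, where row~$j$ carries, in unary predicates (colors), the binary representation of~$j$. Adjacent rows are constrained by the local increment rule: carry propagation is checkable one cell at a time via the vertical and horizontal edges. I would add boundary colors marking the first row (all zeros), the last row, and the left and right borders. With these, $\psi_S$ forces every model to contain a genuine rectangle in which row $j+1$ encodes $j+1$.

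The second step is a \emph{computation gadget}. This is a second grid of height $O(N^{1-\epsilon})$ (time) and width $O(\log N)$ (space) encoding an accepting run of the nondeterministic machine, with the standard local tableau constraints. Its first row is glued to the last row of the counter so that its input is the number stored there, and $\psi_S$ asserts an accepting state somewhere. Nondeterminism is harmless because models may choose any accepting run, so the existence of a model of size $N$ follows from $N \in S$. The area is $O(N^{1-\epsilon}\log N) = o(N)$, leaving room for the exact size adjustment.

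The step I expect to be the main obstacle is making the spectrum \emph{exactly}~$S$. The number written on the last counter row must equal the total number of vertices, not merely the number of rows. My plan is to make the counter count vertices rather than rows: each row stands for $w+O(1)$ vertices, so a row increments by $w+O(1)$ rather than by~1, which is still local arithmetic along the row. The computation grid and a final padding path then account for the rest. Their sizes are small compared with the counter, and I would either also tick them through a slower secondary counter or have the machine subtract their (locally computable) size before testing membership in~$S$. If this bookkeeping is consistent, every model has size equal to the encoded $N$ with $N \in S$. Conversely, for each $N\in S$, choosing $w=\lceil\log N\rceil$, the run, and the padding yields a model. Planarity holds because the gadgets are grids glued along rows plus a path, and the degree bound holds because all vertices are grid cells.
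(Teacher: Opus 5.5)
The paper gives no proof of this theorem; it is quoted from Dawar and Kopczy\'nski. So the only question is whether your reconstruction stands on its own, and at the decisive point it does not.

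\textbf{Rigidity.} Both conclusions quantify over \emph{all} models, and first-order logic expresses neither connectivity nor planarity. The substance of the theorem is therefore that the local rules admit nothing except the intended structure.

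You argue only that every model \emph{contains} a genuine rectangle. Your closing claims (every model has the encoded size, every model is planar) silently assume that every model \emph{is} that rectangle with its gadgets. With cells carrying only the information you describe, this is false:
\begin{itemize}
\item Take a toroidal grid $C_a\times C_b$ of interior counter cells, with all bits and all carries equal to $0$ ($a,b\ge 4$, compatible with whatever colours encode the grid directions).
\item It has no right-border cell, so no carry is ever injected, each row copies the one below, and every local rule holds. The condition ``an accepting state somewhere'' is witnessed by the genuine component.
\item Added as an extra component to a genuine model of size $N$, it gives a model of size $N+ab$. This model is not planar: $C_a\times C_b$ is triangle-free with $2ab$ edges on $ab$ vertices, above the planar bound $2ab-4$.
\end{itemize}
Adding more rules cannot cure this. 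All-zero, carry-free patches occur in huge numbers in the high-order columns of every large genuine model. By Hanf locality, any first-order sentence true in all sufficiently large genuine models is then also true in their union with a suitably large such torus.

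The missing idea is structure, present in \emph{every} local patch, that makes each component well founded. Concretely: a unique origin (first-order expressible), together with overflow-free counters that strictly increase in \emph{both} grid directions. That means your row counter, plus, for instance, each column's index written vertically and repeated in every horizontal band of about $\log W$ rows. With this:
\begin{itemize}
\item every row must end at a right border, so a carry is injected in every row and the row index strictly increases upwards;
\item the walk downwards and then leftwards from any vertex ends at the origin;
\item with commuting squares, every model is a single rectangular grid with its gadgets attached.
\end{itemize}
Only then do planarity and an exact count follow.

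\textbf{Size bookkeeping.} You rightly flag this, but leave it conditional rather than solved. Three points fail as stated:
\begin{itemize}
\item Adding $w+O(1)$ to a binary counter per row is not a local operation unless $w$ is itself written in binary on every row and certified to equal the real width.
\item A bare padding path has no room for a binary counter, so the machine cannot read its length. The padding has to be laid along the grid, say as a partial row measured by the column indices.
\item If the tableau's height depends on the run, its size is unknown to the machine when it starts.
\end{itemize}
A clean way out is to let the counters expose all dimensions of the structure in binary. Fix the tableau's dimensions in advance (or guess the total and check it at the end). Then have the machine compute the total number of vertices from these dimensions before simulating the test for $S$; this fits easily within the allowed time and space.
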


        Call a function~$h$ \emph{time-constructible} if there is a Turing machine that,
        given~$n$ in unary, computes~$h(n)$ in time~$O(h(n))$.
        The hypothesis on~$S$ in Theorem~\ref{thm:planar-spectrum} is satisfied
        by the image of any time-constructible function
        up to an exponential allowing some margin in the time and space usage
        (compare with \cite[Lemma~7.9]{gglgopt25}).

        \begin{corollary}\label{cor:planar-spectrum-constructible}
          Let~$h$ be a time-constructible map with~$h(n) \ge n$.
          Then there is a first-order sentence~$\psi_h$ over the language of colored graphs such that:
          \begin{enumerate}
            \item the spectrum of~$\psi_h$ is $\exp(h(\N))$; and
            \item every model of~$\psi_h$ is a colored planar graph with maximum degree at most~4.
          \end{enumerate}
        \end{corollary}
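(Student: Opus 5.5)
The plan is to apply \cref{thm:planar-spectrum} to the set $S = \exp(h(\N))$, i.e.\ $S=\{2^{h(n)} : n \in \N\}$ (taking $\exp$ in base~2, any fixed base working identically). The corollary then follows immediately, since the sentence $\psi_S$ has spectrum exactly $S$ and all its models are colored planar graphs of maximum degree at most~4. So the whole task is to exhibit a non-deterministic (in fact deterministic) Turing machine that, given $N$ in binary, decides whether $N \in S$ in time $O(N^{1-\epsilon})$ and space $O(\log N)$.

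The algorithm is as follows. Given the binary representation of $N$, with length $L = \lfloor \log_2 N\rfloor + 1$, the machine first checks that $N$ is a power of two, which takes linear time in $L$. If it is, then $N = 2^{L-1}$, and it remains to decide whether $L-1 = h(n)$ for some $n$. Because $h(n) \ge n$, any such $n$ satisfies $n \le L-1$. The machine therefore iterates $n = 0,1,\dots,L-1$. For each $n$ it runs the time-constructibility machine on $n$ written in unary, aborting as soon as the running time exceeds $c\cdot L$ for a suitable constant $c$ (which is safe, since we only care about outputs $h(n) \le L-1$ and these are produced in time $O(h(n)) = O(L)$). Whenever a run terminates, it compares the output with $L-1$. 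It may also stop early once some $h(n) > L-1$, but this requires no monotonicity and is not needed.

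For the complexity, each simulation costs $O(L)$ time, including writing $n$ in unary, and there are at most $L$ values of $n$, so the total time is $O(L^2) = O(\log^2 N)$, which is certainly $O(N^{1-\epsilon})$. The space is also bounded by $O(L) = O(\log N)$, because a run of the machine for $h$ that is cut off after $O(L)$ steps can use at most $O(L)$ tape cells, and the counters and the unary copy of $n$ likewise take $O(L)$ space.

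The only delicate point is this space bound: the definition of time-constructibility controls only time, and it is the exponential gap between $N$ and $\log N$ that makes it possible to truncate each run at $O(\log N)$ steps and hence at $O(\log N)$ space. This is precisely the ``margin'' the exponential provides. The remaining issue is the conventions for small $N$ or $n=0$, which only affect finitely many inputs and can be hardcoded.
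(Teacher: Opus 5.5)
This is correct and is essentially the paper's argument: you apply \cref{thm:planar-spectrum} to $S=\exp(h(\N))$, test that $N$ is a power of two, and use $h(n)\ge n$ to confine the search for $n$ with $h(n)=\log N$ to $n\le\log N$. The only differences are that your clocked deterministic enumeration (time $O(\log^2 N)$) replaces the paper's nondeterministic guess of $k$, and that you make explicit the truncation of each run, which the space bound needs and the paper leaves implicit. One small slip does not affect the proof: stopping early once some $h(n)>L-1$ \emph{would} require monotonicity of $h$, contrary to what you wrote, but you rightly do not use it.
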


        \begin{proof}
          Using Theorem~\ref{thm:planar-spectrum}, it suffices to show that given~$N$ in binary, one can test whether $N \in \exp(h(\N))$ in time~$\O(\log N)$.
          First check that~$N$ is a power of~2, and compute~$\log N$.
          Then guess~$k \le \log N$ (in unary), and check that $h(k) = \log N$.
          Since we assumed~$n \le h(n)$, there is such a~$k$ if and only if $\log N \in h(\N)$,
          and the previous steps can clearly be performed in time~$\O(\log N)$.
        \end{proof}

        We can now strengthen \cite[Lemma~7.9]{gglgopt25} as follows.

        \begin{corollary}
        	\label{cor:PrimRecMapSpec}
          There is a first-order sentence~$\psi_h$ such that:
          \begin{enumerate}
            \item $\spec(\psi_h)$ is infinite;
            \item for any increasing primitive recursive map~$R$, $R(\N) \not\subseteq \spec(\psi_h)$; and
            \item every model of~$\psi_h$ is a colored planar graph with maximum degree at most~4.
          \end{enumerate}
        \end{corollary}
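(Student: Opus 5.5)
We instantiate Corollary~\ref{cor:planar-spectrum-constructible} with a time-constructible map $h$ that grows faster than every primitive recursive function. This gives items~1 and~3 directly, and leaves item~2 to check.

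\emph{Choice of $h$.} We take $h(n) = A(n,n)$, a diagonal Ackermann-type function, or more safely a variant that is exactly time-constructible. For instance, let $h(n)$ be the number of steps of a fixed Turing machine that, on input $1^n$, computes $A(n,n)$ in unary and then halts. Every step count of a machine is time-constructible: simulate the machine while counting steps, and the simulation overhead can be absorbed by padding. We also have $h(n)\ge n$, since the machine must at least read its input. The key growth property is that for every primitive recursive $g$ there is $n_0$ with $h(n) > g(n)$ for all $n\ge n_0$. This holds because $h(n)\ge A(n,n)$, and $A(n,n)$ eventually dominates every primitive recursive function. Applying Corollary~\ref{cor:planar-spectrum-constructible} yields $\psi_h$ with $\spec(\psi_h)=\exp(h(\N))=\{2^{h(n)}:n\in\N\}$, which is infinite because $h$ is unbounded. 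Every model of $\psi_h$ is a colored planar graph of maximum degree at most~4.

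\emph{Item 2.} Let $R$ be increasing and primitive recursive, and suppose towards a contradiction that $R(\N)\subseteq\spec(\psi_h)$. We may assume $h$ is strictly increasing, by replacing it with a cumulative sum if necessary, which preserves time-constructibility and domination. Then each $R(i)$ equals $2^{h(n_i)}$ for a unique $n_i$, and $n_i$ is strictly increasing in $i$ since $R$ is. Hence $n_i\ge i$, which gives
\[ R(i) \ge 2^{h(i)} > h(i). \]
This shows that $R$ grows at least as fast as $h$, but it does not yet contradict domination, since domination bounds $h$ from below, not $R$ from above.

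\emph{The main obstacle: reversing the inequality.} We need to use the sparsity of the spectrum. Consecutive elements $2^{h(n)}$ and $2^{h(n+1)}$ have a gap that is not bounded by any primitive recursive function of the smaller element. Since $R$ is increasing and primitive recursive, so is $g(x)=R(x+1)$. Consider $i$ with $R(i)=2^{h(n_i)}$. Then $R(i+1)$ is a larger spectrum element, so $R(i+1)\ge 2^{h(n_i+1)}$. On the other hand, $R(i+1)\le G(R(i))$ for the primitive recursive function $G(x)=\max_{j\le x} R(j+1)$, using $i\le R(i)$. It therefore suffices to choose $h$ so that
\[ 2^{h(n+1)} > G\bigl(2^{h(n)}\bigr) \]
for every primitive recursive $G$ and all large $n$. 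This is the real design constraint on $h$: it must jump beyond every primitive recursive function of its previous value, not merely dominate primitive recursive functions of $n$. We enforce this by a diagonal definition such as $h(n+1)\ge A(h(n), h(n))$ (an iterated Ackermann function), which is still total computable and can be made time-constructible as above. Then, for any fixed primitive recursive $G$, $A(x,x)$ exceeds $G(2^x)$ for large $x$, giving the contradiction. The points to verify carefully are the time-constructibility of this iterated function, obtained via step counts, and that the domination of $A(x,x)$ over $x\mapsto G(2^x)$ holds, which follows because that map is primitive recursive.
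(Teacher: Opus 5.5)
Your argument was already complete at the line $R(i)\ge 2^{h(i)}>h(i)$. The sentence after it, claiming this does not contradict domination, is a logical slip. Apply domination to the primitive recursive function $g=R$ itself: it gives $h(n)>R(n)$ for all large $n$. That is an upper bound on $R$, and it contradicts $R(i)>h(i)$ for every $i$.

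A single $n$ with $2^{h(n)}>R(n)$ already suffices, and this is exactly the paper's proof. The paper uses the plain Ackermann function $A$ and $f=\exp\circ A$. Once $f(n)>R(n)$, monotonicity of $f$ forces the $n$ distinct values $R(1),\dots,R(n)$ into $f(\{1,\dots,n-1\})$, which has only $n-1$ elements. Your observation that $n_i\ge i$ is the same pigeonhole argument, phrased as a monotone matching. Your step-count device for making $h$ time-constructible is a reasonable refinement of the paper's bare assertion that Ackermann is time-constructible.

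The ``main obstacle'' paragraph is therefore unnecessary, and it is the weakest part of your proposal. The gap condition $2^{h(n+1)}>G(2^{h(n)})$ does lead to a valid contradiction. But it forces a much faster $h$ with $h(n+1)\ge A(h(n),h(n))$, and your justification of time-constructibility ``via step counts as above'' does not yield that inequality. Suppose $h(n)$ is the running time of a machine that computes, in unary, the $n$-th iterate $x_n$ of $x\mapsto A(x,x)$. Then $h(n)$ is in general much larger than $x_n$. Hence knowing $h(n+1)\ge A(x_n,x_n)$ does not give $h(n+1)\ge A(h(n),h(n))$. This can be repaired by feeding the running step counter back into the recursion, but there is no reason to. Drop the last paragraph and conclude directly from $R(i)\ge 2^{h(i)}$ together with domination.
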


        \begin{proof}
          Take~$A(n)$ to be the Ackermann function, and write $f(n) = \exp(A(n))$.
          The Ackermann function is increasing, time-constructible, and for any primitive recursive~$R$, there is~$n$ with $A(n) > R(n)$, and a fortiori $\exp(A(n)) > R(n)$.
          When~$R$ is injective, this implies $R(\N) \not\subseteq f(\N)$.
          Indeed, injectivity gives that $R(\{1,\dots,n\})$ cannot be contained in $f(\{1,\dots,n-1\})$,
          and for any~$m \ge n$, we have $f(m) \ge f(n) > R(\{1,\dots,n\})$.

          Applying Corollary~\ref{cor:planar-spectrum-constructible} to~$A$,
          we obtain a first-order formula~$\psi$ with $\spec(\psi) = \exp(A(\N))$,
          and whose models are planar with degree at most~4,
          hence proving the statement.
        \end{proof}

        The last tool we need is the following.
	Consider a reduction~$f$ from \textbf{SAT} (or \textbf{UNSAT}) to \textbf{$\psi$-dynamics}.
	For~$S$ a \textbf{SAT} (or \textbf{UNSAT}) instance, $f(S)$ is a circuit representing a graph $G_{f(S)}$.
	Recall that $|G_{f(S)}|$ denotes the number of vertices.
	\begin{definition}[{\cite[Definition~7.4]{gglgopt25}}]
	  The reduction~$f$ is \emph{meager at~$n \in \N$} if
	  for any \emph{positive} \textbf{SAT} instance~$S$ of size~$n$, $|G_{f(S)}| \le n$.
	  The \emph{meagerness} set of~$f$ is the set of integers~$n$ at which~$f$ is meager.
	\end{definition}
	Note that in \cite{gglgopt25}, meagerness is parameterized by a polynomial~$P$,
	However, one can freely choose~$P$ in their proof; the previous definition corresponds to the case~$P(n) = n$.

	The size of~$f(S)$ is polynomial in~$S$, hence~$|G_{f(S)}|$ can be up to exponential in~$S$.
	The meagerness set thus indicates sizes for which any positive instance is mapped to a `small' graph.
	This implies that \textbf{SAT} is easy for meager sizes:
	\begin{lemma}[{\cite[Lemma~7.5]{gglgopt25}}]\label{lem:meager-algo}
	  Let~$f$ be a polynomial reduction from \textbf{SAT} (or \textbf{UNSAT}) to \textbf{$\psi$-dynamics}, and $M$ the meagerness set of~$f$.
	  Then there is a polynomial algorithm that solves \textbf{SAT} for all instance sizes in~$M$.
	\end{lemma}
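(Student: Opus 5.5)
The plan is to give an explicit algorithm that, on a \textbf{SAT} instance~$S$ of size $n \in M$, runs in time polynomial in~$n$ and decides~$S$, using~$f$ together with the meagerness property. First compute the circuit $f(S)$ in polynomial time. It encodes a succinct graph $(N,D)$, and the number of vertices $N = |G_{f(S)}|$ can be read directly from $f(S)$. If~$f$ is a reduction from \textbf{SAT}, meagerness at~$n$ says that every positive instance of size~$n$ is mapped to a graph with at most~$n$ vertices. So if $N > n$, the algorithm safely rejects, because $S$ cannot be positive. (For \textbf{UNSAT} one argues with the roles of positive and negative swapped; then \textbf{SAT} is decided by complementing the answer.)

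Otherwise $N \le n$. The algorithm expands the succinct representation into an explicit adjacency structure by evaluating~$D$ on all $N^2 \le n^2$ pairs of vertices. Each evaluation takes time linear in $|D|$, which is polynomial in~$n$. This produces the graph $G_{f(S)}$ explicitly, with at most~$n$ vertices.

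It then checks whether $G_{f(S)} \vDash \psi$ by brute force. Since~$\psi$ is a fixed first-order sentence (or even a fixed MSO sentence, in which case we may fall back on the explicit small size), model-checking on an explicit graph with~$N$ vertices takes time $O(N^{q})$, where~$q$ is the quantifier rank of~$\psi$; this is polynomial in~$n$. For FO this is the naive evaluation of nested quantifiers. By correctness of the reduction, $G_{f(S)} \vDash \psi$ holds iff~$S$ is satisfiable, so the algorithm outputs this answer.

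The only delicate point is the case distinction on~$N$. The rejection when $N > n$ is justified only at sizes in~$M$, and that is exactly where the algorithm is required to be correct. One should also confirm that reading~$N$ off the encoding and the blow-up to an explicit graph stay polynomial; both follow because~$N \le n$ in the expanded case and $|f(S)|$ is polynomial. I expect no real obstacle beyond stating the \textbf{UNSAT} variant carefully.
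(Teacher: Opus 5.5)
Your argument is correct for first-order $\psi$, which is the setting in which the lemma is meant and used: the sentence $\psi_h$ of \cref{cor:PrimRecMapSpec} is first-order. It is the standard proof of the cited lemma, which the paper itself does not reproduce. You read $N$ off $f(S)$ and answer negatively when $N>n$. Otherwise you expand the circuit with $N^2\le n^2$ evaluations and evaluate $\psi$ naively in time $O(|\psi|\,N^{q})$. For an \textbf{UNSAT} reduction the same case split works, whichever side of the reduction meagerness is taken to constrain.

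The parenthetical claim that a fixed MSO sentence can be handled the same way is wrong and should be deleted. A set quantifier ranges over $2^{N}$ subsets, so brute force costs $2^{\Theta(N)}$, which is exponential in $n$ when $N$ is close to $n$. Moreover, a fixed MSO sentence such as 3-colourability is \NP-hard to check on explicit graphs, so no polynomial bound is available in general. In fact the lemma itself fails for MSO unless $\Poly=\NP$. Satisfiability of a CNF is a fixed MSO property of the structure with one element per literal occurrence and three binary relations: \emph{same clause}, \emph{same variable, same sign}, and \emph{same variable, opposite sign}. This structure has at most $n$ elements, so it gives a reduction from \textbf{SAT} that is meager at every $n$. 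The statement must therefore be read, and proved, only for first-order $\psi$, or more generally for $\psi$ whose model checking on explicit graphs is polynomial.
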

	The next result of~\cite{gglgopt25} relates the spectrum of~$\psi$ to the meagerness set of its reduction~$f$.
	\begin{lemma}[{\cite[Lemma~7.6]{gglgopt25}}]\label{lem:robust-prim-rec}
	  Let~$f$ be a polynomial reduction from \textbf{SAT} (or \textbf{UNSAT}) to \textbf{$\psi$-dynamics}. Then
	  \begin{enumerate}
	    \item either the meagerness set of~$f$ is robust, or
	    \item there is an increasing primitive recursive sequence~$\mu$ such that~$\psi$ has a model of size~$\mu(n)$ for all~$n$.
	  \end{enumerate}
	\end{lemma}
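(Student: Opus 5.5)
The plan is to prove the contrapositive: assume the meagerness set $M$ of $f$ is \emph{not} robust, and build explicitly an increasing sequence $\mu$ of sizes of models of $\psi$ by a search whose every loop is bounded, so that $\mu$ is primitive recursive. Unfolding the definition of robustness, non-robustness gives a fixed exponent $k$ such that
\[ \forall \ell \ge 2,\ \exists n \in \{\ell,\ell+1,\dots,\ell^k\},\ n \notin M. \]
By definition of meagerness, $n \notin M$ means that there is a \emph{positive} instance $S$ of size $n$ (satisfiable for \textbf{SAT}, unsatisfiable for \textbf{UNSAT}) with $|G_{f(S)}| > n$. Since $f$ is a reduction and $S$ is positive, $f(S)=(N,D)$ is a positive instance of \textbf{$\psi$-dynamics}, so $G_{f(S)} \vDash \psi$ and $\psi$ has a model of size exactly $N = |G_{f(S)}| > n$. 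This number $N$ can be read directly from the output $f(S)$, since a succinct representation explicitly contains its number of vertices.

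I would define $\mu$ by the following recursion. Set $\ell_0 = 2$. Given $\ell_j$, enumerate in a fixed order (by size, then lexicographically) all instances $S$ of size in $[\ell_j,\ell_j^k]$. For each one:
\begin{itemize}
\item decide whether it is positive by brute force over its at most $2^{|S|}$ assignments;
\item compute $f(S)=(N,D)$ by simulating the polynomial-time machine for $f$;
\item stop at the first positive $S$ with $N > |S|$.
\end{itemize}
Then put $\mu(j) = N$ and $\ell_{j+1} = \mu(j)+1$. By the displayed consequence of non-robustness, such an $S$ always exists. Hence every search is bounded: the range of sizes is bounded by $\ell_j^k$, each satisfiability check by $2^{\ell_j^k}$, and each simulation of $f$ by a fixed polynomial in $\ell_j^k$. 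So the step function $\ell_j \mapsto \mu(j)$ is primitive recursive, and so is $\mu$, obtained from it by primitive recursion. Monotonicity follows from the choice $\ell_{j+1} = \mu(j)+1$: we get $\mu(j+1) > |S_{j+1}| \ge \ell_{j+1} > \mu(j)$. Each $\mu(j)$ is the size of a model of $\psi$, which gives the second alternative of the statement.

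The step I expect to need the most care is the primitive recursiveness, and specifically that every search is bounded in advance. This is exactly why the uniform exponent $k$ extracted from non-robustness matters: without it, the search for a non-meager size would be an unbounded minimisation and would only yield a general recursive $\mu$. A secondary check is that $f$ has a fixed polynomial running time, so a clocked simulation of it is primitive recursive. The \textbf{UNSAT} case is identical, with ``positive'' meaning unsatisfiable.
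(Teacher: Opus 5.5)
Your proof is correct. Non-robustness gives a uniform exponent $k$ that bounds every search in advance, so iterating ``first positive instance of size in $[\ell_j,\ell_j^k]$ whose image has more than $|S|$ vertices'' yields a strictly increasing primitive recursive $\mu$ with values in $\spec(\psi)$. Reading ``positive'' relative to the source problem (unsatisfiable in the \textbf{UNSAT} case) is exactly what makes these graphs models of $\psi$. The paper does not reprove this lemma and only imports it from \cite{gglgopt25}, so your argument stands as a complete, self-contained proof of the version stated here.
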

	The statement of the previous lemma in~\cite{gglgopt25} is somewhat more complex,
	parameterized by an integer $d \ge 1$; the previous result corresponds to the case $d=1$.

	We are now ready to prove the main result of this section.
        \begin{proof}[Proof of Theorem~\ref{theorem:nontrivial}]
          Let $\psi_h$ be given by Corollary~\ref{cor:PrimRecMapSpec}.
          Conditions~1 and~3~from Corollary~\ref{cor:PrimRecMapSpec}
          give that~$\psi_h$ has infinitely many models in~$\Pc_4$.
          On the other hand, condition~2 implies that the complement of~$\spec(\psi_h)$ is infinite
          (by applying the condition to the map $n \mapsto n+k$ for each~$k$).
          Thus for infinitely many~$n$, \emph{all} graphs of size~$n$ are models of~$\lnot\psi_h$.
          Since~$\Pc_4$ contains graphs of all possible sizes, it follows that $\lnot\psi_h$ has infinitely many models in~$\Pc_4$.
          Thus~$\psi_h$ is $\Pc_4$-non-trivial.

	  Now assume \textbf{$\psi_h$-dynamics} is \NP- or \coNP-hard.
	  Thus there is a polynomial reduction~$f$ from \textbf{SAT} or \textbf{UNSAT} to \textbf{$\psi_h$-dynamics}.
	  Consider its meagerness set~$M$.
	  By \cref{lem:meager-algo}, there is a polynomial algorithm solving \textbf{SAT} for all instance sizes in~$M$,
	  hence if~$M$ is robust, then \textbf{SAT} is robustly often polynomial.
	  On the other hand, if~$M$ is not robust, then \cref{lem:robust-prim-rec} gives an increasing primitive recursive sequence~$\mu(n)$ contained in~$\spec(\psi_h)$,
	  contradicting condition~2 from \cref{cor:PrimRecMapSpec}.
        \end{proof}

        \section*{Acknowledgments}
        
        The authors thank Pierre Ohlmann for many helpful discussions and de example of cw-size-dependent formulae.
        This work received financial support from projects
        ANR-24-CE48-7504 ALARICE and
        HORIZON-MSCA-2022-SE-01 101131549 ACANCOS.
      
\bibliographystyle{plainurl}
\bibliography{biblio}

\end{document}